\newcommand{\com}[2]{[#1,#2]}
\newcommand{\pair}[2]{\langle #1,#2\rangle}
\def\vol{\mathrm{vol}}
\def\bbR{\mathbb{R}}
\newtheorem{propo}{Proposition}
\newtheorem{lem}{Lemma}
\newtheorem{cor}{Corollary}
\theoremstyle{definition}
\newtheorem{defi}{Definition}
\newtheorem{example}{Example}
\theoremstyle{remark}
\newtheorem{remark}{Remark}
\title{%
  QFT  on homothetic Killing twist deformed curved spacetimes}
\author{%
  Alexander Schenkel%
    \thanks{e-mail: \texttt{aschenkel@physik.uni-wuerzburg.de}}\\
  \hfil\\
  Institut f\"ur Theoretische Physik und Astrophysik\\
  Universit\"at W\"urzburg, Am Hubland, 97074 W\"urzburg, Germany}
\date{September 2010}
\begin{document}
\maketitle

\begin{abstract}
We study the quantum field theory (QFT) of a free, real, massless and curvature coupled scalar field on self-similar
symmetric spacetimes, which are deformed by an abelian Drinfel'd twist constructed from a Killing 
and a homothetic Killing vector field. 
In contrast to deformations solely by Killing vector fields, such as the Moyal-Weyl Minkowski spacetime, 
the equation of motion and Green's operators are deformed.
We show that there is a $\ast$-algebra isomorphism between the QFT on the deformed and
the formal power series extension of the QFT on the undeformed spacetime.
We study the convergent implementation of our deformations for toy-models. For these models it is found that there
is a $\ast$-isomorphism between the deformed Weyl algebra and a reduced undeformed Weyl algebra,
where certain strongly localized observables are excluded.
Thus, our models realize the intuitive physical picture that noncommutative geometry prevents arbitrary
localization in spacetime.

\end{abstract}
\newpage

\section{Introduction}
Replacing classical geometry by a noncommutative (NC) geometry is a widely used technique to 
incorporate quantum gravity effects into (low-energy) QFTs. 
By now, there are different realizations of the basic idea of modifying/deforming commutative QFTs 
using methods of NC geometry. In particular, the following different routes have been taken:
\begin{flalign*}
\begin{CD}
\text{spacetime} @> \quad\text{deform}\quad  >>  \text{deformed spacetime} \\
@V \text{quantum fields} VV                   				@VV \text{quantum fields} V\\
\text{QFT}   @> \quad\text{deform} \quad>> \text{deformed QFT}
\end{CD}
\end{flalign*}
The resulting deformed QFT is expected to depend on the path chosen.
The lower path, which we call NC QFT on curved spacetimes, assumes that the QFT is the fundamental object
to be studied and spacetime is just encoded in the net structure of the algebras of local observables. 
Wedge local deformations of QFTs on curved spacetimes by Killing vector fields 
and Rieffel products have been studied 
in \cite{Dappiaggi:2010hc}.
The upper path, which we call QFT on NC curved spacetimes, assumes that spacetime itself is of direct importance
and has to be deformed. The QFT is then defined on this NC spacetime. We have constructed the QFT on NC curved spacetimes
 for a free scalar field and a large class of formal Drinfel'd twist deformations in \cite{Ohl:2009qe}, without
assuming any symmetries of the background spacetime.

In the present work we specialize the formalism of \cite{Ohl:2009qe} to a certain class of deformations.
The deformations we consider are abelian Drinfel'd twists constructed from a Killing vector field and a homothetic
Killing vector field. The reason for this choice is threefold: Firstly, as pointed out in \cite{Ohl:2009qe},
the equation of motion, the Green's operators and also the algebra of observables of a free scalar QFT are not deformed
when one uses deformations by Killing vector fields. Thus, Killing deformations are trivial in the formalism \cite{Ohl:2009qe}.
As we will see later, deforming by a Killing and a homothetic Killing vector field is the simplest nontrivial deformation.
Secondly, due to the simplicity of the deformation we can study its convergent implementation for toy-models 
and the resulting new physical effects, such
as nonlocality. Thirdly, as pointed out in \cite{Aschieri:2009qh}, homothetic Killing vector fields
play an important role in constructing exact solutions of the NC Einstein equations \cite{Aschieri:2005yw, Aschieri:2005zs}.

The outline of this paper is as follows: In Section \ref{sec:homothetic} we introduce homothetic
Killing vector fields and provide examples of spacetimes admitting them. We review the formal deformation quantization
of algebras and their modules by (abelian) Drinfel'd twists in Section \ref{sec:deformation} in order
to define a deformed action functional of a free, real, massless and curvature coupled scalar field theory.
In Section \ref{sec:formal} we study in detail the scalar field theory on curved spacetimes 
deformed by an abelian twist constructed from a Killing and a homothetic Killing vector field.
We construct the $\ast$-algebra of field polynomials and show that there is a $\ast$-algebra isomorphism
to the formal power series extension of the $\ast$-algebra of field polynomials of the commutative QFT.
As a consequence, there also exists an isomorphism between the groups of symplectic automorphisms (e.g.~isometries)
and the spaces of algebraic states of the deformed and the undeformed QFT. In Section \ref{sec:convergent}
we study the convergent implementation of our deformations for Friedmann-Robertson-Walker (FRW) toy-models. 
We show that there is a $\ast$-isomorphism between the deformed Weyl algebra $A_\star$ and a nonstandard undeformed Weyl algebra
$A_\text{img}$.
The algebra $A_\text{img}$ does not carry a representation of all isometries of the FRW spacetime.
However, there exists an injective $\ast$-morphism $\mathfrak{S}$ from $A_\star$ into an (i.g.~extended) undeformed
Weyl algebra $A_\text{ext}$, which carries a representation of all isometries. 
Thus, we can induce symmetric states on $A_\star$ by pulling-back symmetric states
on $A_\text{ext}$.
In order to show that our models can, in principle, lead to observable features we briefly discuss
the NC modifications to the cosmological power spectrum of a scalar field, i.e.~of the two-point 
correlation function.
We conclude and give an outlook in Section \ref{sec:conc}.
In the Appendix \ref{app:twisted}
we show that the so-called twisted QFT construction \cite{Zahn:2006wt,Balachandran:2007vx,Aschieri:2007sq}, 
i.e.~a realization of the lower path in the diagram above, can only be done if all homothetic Killing 
vector fields entering the twist are Killing.

\section{\label{sec:homothetic}Homothetic Killing vector fields}
Let $(\mathcal{M}, g)$ be a $D$-dimensional smooth Lorentzian manifold of signature $(-,+,\cdots,+)$.
 We denote by $\vol$ the canonical metric volume form and by $\mathfrak{R}$ the curvature of $g$. The algebra of smooth, complex
 valued functions on $\mathcal{M}$ is denoted by $\mathcal{A}=\bigl(C^{\infty}(\mathcal{M}),\cdot\bigr)$, where 
 $\cdot$ is the usual point-wise multiplication. 
 We denote by $\Xi$ the smooth complex vector fields on $\mathcal{M}$, i.e.~the complexified smooth sections of the tangent bundle.
\begin{defi}
A vector field $v\in\Xi$ is called a Killing vector field, if $\mathcal{L}_v (g)=0$.
It is called a homothetic Killing vector field, if $\mathcal{L}_v(g)= c_v\, g$
with $c_v\in\mathbb{C}$. $v$ is called proper, if $c_v\neq 0$.
\end{defi}
Obviously, each Killing vector field is also a homothetic Killing vector field with $c_v=0$. 
A homothetic Killing vector field is a special case of a conformal Killing vector field $v\in\Xi$,
 satisfying $\mathcal{L}_v(g)=h\,g$ with $h\in\mathcal{A}$. We do not discuss general conformal Killing vector fields
 in this work.

We remind the reader of the following standard result.
\begin{propo}
The homothetic Killing vector fields form a Lie subalgebra $(\mathfrak{H},\com{~}{~})\subseteq (\Xi,\com{~}{~})$
of the Lie algebra of vector fields on $\mathcal{M}$.
The Killing vector fields form a Lie subalgebra $(\mathfrak{K},\com{~}{~})\subseteq (\mathfrak{H},\com{~}{~})$ and
the following inclusion holds true
\begin{flalign}
\com{\mathfrak{H}}{\mathfrak{H}}\subseteq \mathfrak{K}~.
\end{flalign}
\end{propo}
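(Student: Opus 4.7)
The plan is to deduce everything from the single classical identity $\mathcal{L}_{[u,v]} = [\mathcal{L}_u,\mathcal{L}_v]$, valid on arbitrary tensor fields, together with the $\mathbb{C}$-linearity and Leibniz properties of the Lie derivative. Closure of $\mathfrak{H}$ and $\mathfrak{K}$ under scalar multiplication and addition is immediate from the linearity of $\mathcal{L}_v$ in $v$, so the only content is closure under the bracket.

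First I would take $u,v\in\mathfrak{H}$ with $\mathcal{L}_u g = c_u g$ and $\mathcal{L}_v g = c_v g$ and compute
\begin{flalign*}
\mathcal{L}_{[u,v]}g \;=\; \mathcal{L}_u\mathcal{L}_v g - \mathcal{L}_v \mathcal{L}_u g \;=\; \mathcal{L}_u(c_v g) - \mathcal{L}_v(c_u g) \;=\; c_v c_u\, g - c_u c_v\, g \;=\; 0,
\end{flalign*}
using that the constants $c_u,c_v\in\mathbb{C}$ are pulled out of $\mathcal{L}$. This single computation yields the strongest of the three claims, namely $[\mathfrak{H},\mathfrak{H}]\subseteq\mathfrak{K}$, since the result is not merely a homothetic but genuinely a Killing vector field (coefficient $0$).

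From this the two subalgebra statements follow for free: the inclusion $\mathfrak{K}\subseteq\mathfrak{H}$ is trivial (take $c_v=0$), hence $[\mathfrak{H},\mathfrak{H}]\subseteq\mathfrak{K}\subseteq\mathfrak{H}$ shows that $\mathfrak{H}$ is closed under the bracket, and specializing $u,v\in\mathfrak{K}$ (i.e.\ $c_u=c_v=0$) shows that $\mathfrak{K}$ is closed as well. The Jacobi identity and antisymmetry are inherited from $(\Xi,[~,~])$, so no separate verification is needed.

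There is no real obstacle here; the only point that deserves care is the justification of $\mathcal{L}_{[u,v]}=[\mathcal{L}_u,\mathcal{L}_v]$ on the metric tensor for \emph{complex} vector fields, but since the identity is $\mathbb{C}$-bilinear it extends from the real case by decomposing $u=u_1+iu_2$, $v=v_1+iv_2$ into real and imaginary parts. I would either invoke this extension in one line or simply cite the identity as standard.
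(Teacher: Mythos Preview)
Your proposal is correct and follows essentially the same route as the paper: both verify linear closure directly and then use the identity $\mathcal{L}_{[u,v]} = [\mathcal{L}_u,\mathcal{L}_v]$ to compute $\mathcal{L}_{[u,v]}g = c_vc_u\,g - c_uc_v\,g = 0$, obtaining $[\mathfrak{H},\mathfrak{H}]\subseteq\mathfrak{K}$ in one stroke. Your additional remark on extending the identity to complex vector fields by $\mathbb{C}$-bilinearity is a nice touch the paper leaves implicit.
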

\begin{proof}
Let $v,w\in\mathfrak{H}$. Then $\mathcal{L}_{\beta\,v+\gamma\,w}(g)=\beta\mathcal{L}_v(g)+\gamma\mathcal{L}_w(g)
=(\beta\,c_v +\gamma\,c_w)g$, for all $\beta,\gamma\in\mathbb{C}$. Thus, $\mathfrak{H}$ is a vector space over $\mathbb{C}$
and $\mathfrak{K}\subseteq\mathfrak{H}$ is a vector subspace.
Furthermore,
\begin{flalign}
\mathcal{L}_{\com{v}{w}}(g)=(\mathcal{L}_{v}\circ\mathcal{L}_w-\mathcal{L}_w\circ\mathcal{L}_v)(g)= (c_v\,c_w-c_w\,c_v)g=0~,
\end{flalign}
such that $[v,w]\in\mathfrak{K}\subseteq \mathfrak{H}$. For $v,w\in\mathfrak{K}$ we trivially find $\com{v}{w}\in\mathfrak{K}$.
\end{proof}
\noindent 
It can be shown that $\text{dim}(\mathfrak{K})\leq\text{dim}(\mathfrak{H})\leq\text{dim}(\mathfrak{K})+1$.
 To prove this, assume
that there are two proper homothetic Killing vector fields $v,w\in\mathfrak{H}$, satisfying
$\mathcal{L}_v(g)=c_v\,g$ and $\mathcal{L}_w(g)=c_w\,g$ with $c_v,c_w\neq0$.
Then $u:=c_w\, v-c_v\,w$ is a Killing vector field, since
 $\mathcal{L}_u(g)=\mathcal{L}_{c_w\, v-c_v\,w}(g)=(c_w c_v-c_v c_w)g=0$, and
 $w=(c_w\,v-u)/c_v$ is a linear combination of a proper homothetic Killing vector field and a Killing vector field.

Let us provide some examples of manifolds allowing for proper homothetic Killing vector fields.
\begin{example}
\label{ex:mink}
Let $\mathcal{M}=\mathbb{R}^D$ and let $g=\eta_{\mu\nu}dx^\mu\otimes dx^\nu$,
where $x^\mu$ are global coordinates on $\mathbb{R}^D$ and $\eta_{\mu\nu}=\text{diag}(-1,1,\dots,1)_{\mu\nu}$.
It is well-known that $\mathfrak{K}$ is the Lie algebra of the Poincar{\'e} group $SO(D-1,1)\ltimes \mathbb{R}^D$.
A proper homothetic Killing vector field is given by the dilation $v=x^\mu\partial_\mu$,
satisfying $\mathcal{L}_v(g) = 2\,g$.
\end{example}
\begin{example}[\cite{Eardley:1973fm}]
\label{ex:frw}
Let $\mathcal{M}=(0,\infty) \times \mathbb{R}^{D-1}$ and let $g=-dt\otimes dt +a(t)^2\,\delta_{ij}dx^i\otimes dx^j$,
where $t\in(0,\infty)$ is the cosmological time, $x^i,~i=1,\dots,\,D-1$, are comoving coordinates and $a(t)$ is the scale
factor of the universe. If we assume that $a(t)\propto t^p$, where $p\in\mathbb{R}$, we have a
proper homothetic Killing vector field
\begin{flalign}
v=t\partial_t +\left(1-t\,\frac{\dot a(t)}{a(t)}\right)x^i\partial_i=t\partial_t+(1-p)\,x^i\partial_i~,
\end{flalign}
satisfying $\mathcal{L}_v(g)= 2\,g$. These spacetimes are relevant in cosmology, since a perfect fluid with
equation of state $P=\omega\rho$, where $P$ is the pressure, $\rho$ is the energy density and $\omega\in\mathbb{R}$
is a parameter, leads to a scale factor $a(t)\propto t^{\frac{2}{3(\omega+1)}}$, i.e.~$p=\frac{2}{3(\omega+1)}$. 
\end{example}
For more examples, including the Kasner spacetime and the  plane-wave spacetime, as well as a 
construction principle for spacetimes allowing for a proper homothetic Killing vector field see \cite{Eardley:1973fm}.


\section{\label{sec:deformation}Formal deformation quantization by abelian Drinfel'd twists}
We follow the approach of \cite{Aschieri:2005yw, Aschieri:2005zs}, but restrict ourselves to a relatively simple class
of deformations, the so-called abelian twists \cite{Gerstenhaber,Reshetikhin:1990ep,Jambor:2004kc}
\begin{flalign}
\mathcal{F}=\exp\left(-\frac{i\lambda}{2}\Theta^{ab}X_a\otimes X_b\right)~.
\end{flalign}
Here $X_a\in\Xi$ are mutually commuting vector fields, i.e.~$\com{X_a}{X_b}=0~\forall_{ab}$, 
$\Theta^{ab}$ is real, constant and antisymmetric and $\lambda$ is the deformation parameter. 
We assume the $X_a$ to be real.
The inverse twist is given by (sum over $\alpha$ understood)
\begin{flalign}
\label{eqn:twist}
\mathcal{F}^{-1}=\bar f^\alpha\otimes \bar f_\alpha= \exp\left(\frac{i\lambda}{2}\Theta^{ab}X_a\otimes X_b\right)~.
\end{flalign}
The commutative algebra of functions $\mathcal{A}=\bigl(C^\infty(\mathcal{M}),\cdot\bigr)$
 is deformed into a NC associative algebra by introducing the $\star$-product
 \begin{flalign}
 h\star k := \bar f^\alpha( h )\cdot \bar f_\alpha( k )~,
 \end{flalign}
 for all $h,k\in C^\infty(\mathcal{M})$.
 The vector fields act on functions via the Lie derivative.
The precise definition of the deformed algebra is $\mathcal{A}_\star:= \bigl(C^\infty(\mathcal{M})[[\lambda]],\star\bigr)$,
where $[[\lambda]]$ denotes formal power series. 

Similarly, we deform the differential calculus $(\Omega^\bullet,\wedge,d)$ of differential forms on $\mathcal{M}$ by
using the twist (\ref{eqn:twist}). We define the deformed wedge product by
\begin{flalign}
\omega\wedge_\star \omega^\prime := \bar f^\alpha( \omega )\wedge  \bar f_\alpha ( \omega^\prime )~,
\end{flalign}
for all $\omega, \omega^\prime\in \Omega^\bullet$. The vector fields act via the Lie derivative on differential forms.
It turns out that the exterior differential $d$ satisfies the Leibniz rule
\begin{flalign}
d(\omega\wedge_\star \omega^\prime) = (d\omega)\wedge_\star \omega^\prime + (-1)^{\text{deg}(\omega)} \omega\wedge_\star(d\omega^\prime)~,
\end{flalign}
for all $\omega,\omega^\prime\in\Omega^\bullet$, since Lie derivatives commute with $d$. 
Thus, the exterior differential does not have to be deformed. 
The deformed differential calculus is given by $\Omega^\bullet_\star:=(\Omega^\bullet[[\lambda]],\wedge_\star,d)$.
Note that the formal power series of $n$-forms $\Omega_\star^n:=\Omega^{n}[[\lambda]]$ are $\mathcal{A}_\star$-bimodules,
where the left and right $\mathcal{A}_\star$-action is provided by the deformed  wedge product.

We deform the vector fields $\Xi$ into an $\mathcal{A}_\star$-bimodule of deformed vector fields $\Xi_\star := \Xi[[\lambda]]$ by 
employing the deformed left and right $\mathcal{A}_\star$-action
\begin{flalign}
h\star v:=\bar f^\alpha(h)\,\bar f_\alpha(v)~,\quad v\star h:=\bar f^\alpha(v)\,\bar f_\alpha(h)~,
\end{flalign}
for all $h\in\mathcal{A}_\star$ and $v\in\Xi_\star$. 
The action of the twist  on $\Xi[[\lambda]]$ is given by the Lie derivative.
The duality pairing between vector fields and one-forms can also be deformed by defining
\begin{flalign}
\pair{v}{\omega}_\star:= \pair{\bar f^\alpha(v)}{\bar f_\alpha(\omega)}~,
\end{flalign}
for all $v\in\Xi_\star$ and $\omega\in\Omega^1_\star$. The $\star$-pairing 
with $v$ on the right and $\omega$ on the left is defined analogously. One obtains the following relevant property
by using identities of the twist
\begin{flalign}
\label{eqn:pairinglinearity}
\pair{h\star v\star k}{\omega\star l}_\star=h\star\pair{v}{k \star\omega}_\star \star l~,
\end{flalign}
for all $h,k,l\in\mathcal{A}_\star$, $v\in\Xi_\star$ and $\omega\in \Omega^1_\star$.

Employing the $\star$-tensor product
\begin{flalign}
\tau\otimes_\star\tau^\prime := \bar f^\alpha(\tau)\otimes\bar f_\alpha(\tau^\prime)~,
\end{flalign}
we can deform the tensor algebra $(\mathcal{T},\otimes)$ over $\mathcal{A}$
generated by $\Xi$ and $\Omega^1$ into the tensor algebra $(\mathcal{T}_\star,\otimes_\star)$ 
over $\mathcal{A}_\star$ generated by $\Xi_\star$ and $\Omega^1_\star$.
The relation (\ref{eqn:pairinglinearity}) extends to the $\star$-tensor algebra 
\begin{flalign}
\pair{\tau\otimes_\star v\star h}{\omega\otimes_\star\tau^\prime}_\star= \tau\otimes_\star \pair{v}{h\star \omega}_\star \star\tau^\prime~,
\end{flalign}
for all $h\in\mathcal{A}_\star$, $v\in\Xi_\star$, $\omega\in \Omega^1_\star$ and $\tau,\tau^\prime\in\mathcal{T}_\star$.

Using these tools we are in the position to construct a deformed action functional for a free, real, massless 
and curvature coupled scalar field $\Phi$.
We follow \cite{Ohl:2009qe} and define
\begin{flalign}
\label{eqn:action}
 S_\star :=-\frac{1}{2}\int\limits_\mathcal{M}\left(\pair{\pair{d\Phi}{g^{-1_\star}}_\star}{d\Phi}_\star + \xi \,\Phi\star \mathfrak{R}\star \Phi\right) \star \text{vol}~,
\end{flalign}
where $\xi\in\mathbb{R}$ and $g^{-1_\star}\in\Xi_\star\otimes_\star\Xi_\star$ 
is the $\star$-inverse metric of $g$ defined by $\pair{g^{-1_\star}}{\pair{g}{v}_\star}_\star=v$
and $\pair{g}{\pair{g^{-1_\star}}{\omega}_\star}_\star=\omega$, for all $v\in\Xi_\star$ and $\omega\in\Omega^1_\star$.


\section{\label{sec:formal}Free, real, massless and curvature coupled scalar field on homothetic Killing deformed spacetimes}
\subsection{Deformed equation of motion}
We study in detail a subclass of the deformations (\ref{eqn:twist}), where 
$a,b\in\lbrace 1,2 \rbrace$, $X_1\in\mathfrak{K}$ and $X_2\in\mathfrak{H}$.
We use the normalization $\Theta^{12}=-\Theta^{21}=1$, $\Theta^{11}=\Theta^{22}=0$
and $\mathcal{L}_{X_2}(g) = c \,g$, $c\in\mathbb{R}$. We call these
deformations homothetic Killing deformations by two vector fields.
\begin{propo}
 Consider a homothetic Killing deformation by two vector fields $X_1\in\mathfrak{K}$ and $X_2\in\mathfrak{H}$
of a $D$-dimensional smooth Lorentzian manifold $(\mathcal{M},g)$. 
Then $g^{-1_\star} = g^{-1}$, where $g^{-1}\in\Xi\otimes\Xi$ is the undeformed inverse metric field.
The equation of motion corresponding to the scalar field action (\ref{eqn:action}) is given by
(suppressing the symbol $\mathcal{L}$ for Lie derivatives)
\begin{flalign}
\label{eqn:eomoperator}
 \widehat{P}_\star(\Phi)= \cos\left(\frac{\lambda c}{4}\left(D+2\right)X_1\right)\,\left(\square_g -\xi\, \mathfrak{R}\right)\Phi=0~,
\end{flalign}
where $\square_g$ is the undeformed d'Alembert operator.
\end{propo}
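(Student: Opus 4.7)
The plan is to reduce the full deformed action $S_\star$ to the undeformed-looking expression $\tfrac{1}{2}\int \Phi\,(\square_g-\xi\mathfrak{R})\,e^{\frac{i\lambda c(D+2)}{4}X_1}\Phi\,\vol$ and then vary $\Phi$; the cosine in $\widehat{P}_\star$ will then emerge because the two copies of $\Phi$ contribute $e^{\pm\frac{i\lambda c(D+2)}{4}X_1}$ after the requisite integration by parts. Everything rests on the transformation rules of the geometric building blocks under $\mathcal{L}_{X_1}$ and $\mathcal{L}_{X_2}$: the Killing condition gives $\mathcal{L}_{X_1}(g)=\mathcal{L}_{X_1}(g^{-1})=\mathcal{L}_{X_1}(\mathfrak{R})=\mathcal{L}_{X_1}(\vol)=0$, while $\mathcal{L}_{X_2}(g)=cg$ implies $\mathcal{L}_{X_2}(g^{-1})=-cg^{-1}$, $\mathcal{L}_{X_2}(\vol)=\tfrac{cD}{2}\vol$ (from $\sqrt{-\det g}\mapsto\sqrt{-\det g}(1+ct)^{D/2}$ along the flow), and $\mathcal{L}_{X_2}(\mathfrak{R})=-c\mathfrak{R}$ (since a constant conformal rescaling leaves the Christoffel symbols invariant, so $\mathfrak{R}=g^{\mu\nu}R_{\mu\nu}$ picks up only the inverse-metric factor). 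All building blocks are thus $X_a$-eigenobjects, which is what makes the abelian twist tractable.

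For $g^{-1_\star}=g^{-1}$, I would use the abelian factorisation $\mathcal{F}^{-1}=e^{\frac{i\lambda}{2}X_1\otimes X_2}\,e^{-\frac{i\lambda}{2}X_2\otimes X_1}$. Acting on $g\otimes v$, the second factor reduces to the identity since $X_1(g)=0$, while the first yields $g\otimes e^{-\frac{i\lambda c}{2}X_1}(v)$ because $X_2(g)=cg$, so $\pair{g}{v}_\star=\pair{g}{e^{-\frac{i\lambda c}{2}X_1}v}$. The analogous computation with $g^{-1}$ gives $\pair{g^{-1}}{\omega}_\star=\pair{g^{-1}}{e^{+\frac{i\lambda c}{2}X_1}\omega}$, the sign flip coming from $\mathcal{L}_{X_2}(g^{-1})=-cg^{-1}$. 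Since $X_1$ is Killing it commutes with the ordinary contraction $\pair{g}{\cdot}$; composing the two pairings cancels the two exponentials and reproduces $v$, confirming the defining relation of the $\star$-inverse metric.

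For the equation of motion, the decisive tool is an integration-by-parts identity in the deformed integral. From $\mathrm{div}(X_2)=\tfrac{cD}{2}$ one has $\int X_2(F)G\,\vol=-\int F(X_2+\tfrac{cD}{2})G\,\vol$, and combining this with the Killing IBP for $X_1$ allows the series $F\star G$ to be resummed under the integral sign: the telescoping identity $\sum_{k=0}^n\binom{n}{k}(-1)^k X_2^{n-k}(X_2+\tfrac{cD}{2})^k=(-\tfrac{cD}{2})^n$ collapses the $X_2$-dependence and yields $\int F\star G\,\vol=\int F\,e^{\frac{i\lambda cD}{4}X_1}G\,\vol$, together with the auxiliary $\int F\star\vol=\int F\,\vol$ needed to eliminate the outer $\star\vol$. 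Combining these with the easy identity $\Phi\star\mathfrak{R}=e^{-\frac{i\lambda c}{2}X_1}(\Phi)\,\mathfrak{R}$ reduces the curvature-coupling term of the action to $\int\Phi\,\mathfrak{R}\,e^{\frac{i\lambda c(D+2)}{4}X_1}\Phi\,\vol$.

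The kinetic term is the main obstacle. After showing $\pair{d\Phi}{g^{-1}}_\star=g^{-1}(d\tilde\Phi,\cdot)$ with $\tilde\Phi=e^{-\frac{i\lambda c}{2}X_1}\Phi$, the vector $V=g^{-1}(d\tilde\Phi,\cdot)$ does not transform by a pure scalar under $X_2$. I would handle this by first establishing $X_2^n(V)=g^{-1}(d(X_2-c)^n\tilde\Phi,\cdot)$ by a direct induction, then expanding the second $\star$-pairing as a double series in $j,k$, converting each resulting integral to $-\int A\,\square_g B\,\vol$ by the divergence theorem, and using the commutation $\square_g\circ X_2=(X_2+c)\square_g$ (inherited from the conformal scaling $\square_g\mapsto e^{-2\sigma}\square_g$) to push $\square_g$ all the way to the right. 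A second application of the homothetic IBP and the same binomial collapse then reduce the kinetic contribution to $-\int\Phi\,e^{\frac{i\lambda c(D+2)}{4}X_1}\square_g\Phi\,\vol$, giving $S_\star=\tfrac{1}{2}\int\Phi\,e^{\frac{i\lambda c(D+2)}{4}X_1}(\square_g-\xi\mathfrak{R})\Phi\,\vol$. Varying $\Phi$ and using the Killing IBP to move $e^{\pm\frac{i\lambda c(D+2)}{4}X_1}$ onto $\delta\Phi$ gives $\delta S_\star=\int\delta\Phi\,\tfrac{1}{2}\bigl(e^{+\frac{i\lambda c(D+2)}{4}X_1}+e^{-\frac{i\lambda c(D+2)}{4}X_1}\bigr)(\square_g-\xi\mathfrak{R})\Phi\,\vol$, and the two exponentials combine to $2\cos\bigl(\tfrac{\lambda c(D+2)}{4}X_1\bigr)$, yielding exactly $\widehat{P}_\star$.
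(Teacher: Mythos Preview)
Your argument is correct and rests on the same ingredients as the paper's proof: the eigenvalue behaviour of $g$, $g^{-1}$, $\vol$ and $\mathfrak{R}$ under $X_1$ and $X_2$, together with integration by parts along the Killing and homothetic directions. The treatment of $g^{-1_\star}=g^{-1}$ is identical to the paper's.

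The difference is purely organisational. The paper varies the action first and then invokes graded cyclicity (here in the form $\int h\star\vol=\int h\,\vol$, which survives even though $X_2$ is not Killing) together with the pairing identity~(\ref{eqn:hkprop1}) to simplify $\delta S_\star$ directly into the stated cosine form. You instead first reduce the full action to the undeformed-looking expression $S_\star=\tfrac12\int\Phi\,e^{\frac{i\lambda c(D+2)}{4}X_1}(\square_g-\xi\mathfrak{R})\Phi\,\vol$ and only then vary, obtaining the cosine from the symmetrisation of the two copies of $\Phi$. Your explicit identity $\int F\star G\,\vol=\int F\,e^{\frac{i\lambda cD}{4}X_1}G\,\vol$, derived via the binomial collapse $\sum_k\binom{n}{k}(-1)^k X_2^{n-k}(X_2+\tfrac{cD}{2})^k=(-\tfrac{cD}{2})^n$, is precisely the concrete content of what the paper calls graded cyclicity in this non-Killing setting; the paper uses it as a black box while you reconstruct it from scratch. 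Your route is longer---the double-series manipulation of the kinetic term via $X_2^n(V)=g^{-1}(d(X_2-c)^n\tilde\Phi,\cdot)$ and $\square_g\circ X_2=(X_2+c)\circ\square_g$ is real work that the paper bypasses---but it has the advantage of exhibiting the reduced action explicitly, which makes the origin of the factor $D+2$ (namely $\tfrac{D}{2}$ from $\vol$ plus $1$ from $g^{-1}$ or $\mathfrak{R}$) completely transparent.
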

\begin{proof}
The $\star$-inverse metric $g^{-1_\star}\in\Xi_\star\otimes_\star\Xi_\star$ of $g$ exists and is unique. 
We show that the undeformed inverse metric $g^{-1}\in\Xi\otimes\Xi$ defined by $\pair{g^{-1}}{\pair{g}{v}}=v$ 
and $\pair{g}{\pair{g^{-1}}{\omega}}=\omega$, for all $v\in\Xi$ and $\omega\in\Omega^1$, is equal to $g^{-1_\star}$.
For $g^{-1}$ one easily finds $\mathcal{L}_{X_1}(g^{-1})=0$ and $\mathcal{L}_{X_2}(g^{-1})=-c\,g^{-1}$.
Using the homothetic Killing property we obtain
\begin{flalign}
\label{eqn:hkprop1}
\pair{g}{v}_\star = \pair{g}{e^{-\frac{i\lambda c}{2}X_1}v}~,\quad \pair{g^{-1}}{\omega}_\star=\pair{g^{-1}}{e^{\frac{i\lambda c}{2}X_1}\omega}~,
\end{flalign}
for all $v\in\Xi_\star$ and $\omega\in\Omega^1_\star$. Thus,
\begin{flalign}
 \pair{g^{-1}}{\pair{g}{v}_\star}_\star = \pair{g^{-1}}{\pair{g}{v}}=v~,\quad
\pair{g}{\pair{g^{-1}}{\omega}_\star}_\star = \pair{g}{\pair{g^{-1}}{\omega}}=\omega~,
\end{flalign}
by using the invariance of $g$ and $g^{-1}$ under $X_1$.

\noindent For the metric volume form one finds that $\mathcal{L}_{X_1}(\vol)=0$ and $\mathcal{L}_{X_2}(\vol)=\frac{c D}{2}\vol$.
For the curvature we have $\mathcal{L}_{X_1}(\mathfrak{R})=0$ and $\mathcal{L}_{X_2}(\mathfrak{R})=-c\,\mathfrak{R}$ \cite{Aschieri:2009qh}.
Using this, (\ref{eqn:hkprop1}) and graded cyclicity in order to remove one $\star$ under the integral, we obtain
for the variation of the action (\ref{eqn:action}) by functions $\delta\Phi$ of compact support
\begin{flalign}
 \delta S_\star = 
\int\limits_\mathcal{M}\delta\Phi\,\vol\,\cos\left(\frac{\lambda c}{4}(D+2)X_1\right)\,(\square_g-\xi\,\mathfrak{R}) \Phi~.
\end{flalign}
\end{proof}
\begin{remark}
\label{remark:eomoperators}
 The scalar-valued equation of motion operator $\widehat{P}_\star$ is defined slightly different to
the one in \cite{Ohl:2009qe}. The variation of the action (\ref{eqn:action}) yields a top-form valued equation of motion
operator. In \cite{Ohl:2009qe} we have used the NC Hodge operator
 $\mathcal{A}_\star\to\Omega_\star^D,~\varphi\mapsto\varphi\star\vol$ in order to extract 
the scalar-valued equation of motion operator $P_\star$, while here we have used the classical Hodge operator
$\mathcal{A}_\star\to\Omega_\star^D,~\varphi\mapsto \varphi\,\vol$ to extract $\widehat{P}_\star$.
For the class of models discussed in the present work the relation between these two equation 
of motion operators is given by
\begin{flalign}
 P_\star = e^{-\frac{i\lambda c D}{4}X_1}\circ \widehat{P}_\star~.
\end{flalign}
\end{remark}
Note that in case we deform by two Killing vector fields $X_1,X_2\in\mathfrak{K}$ we have $c=0$ and the equation
of motion operator $\widehat{P}_\star$ (\ref{eqn:eomoperator}) is undeformed. This is a generalization of the well-known
result that the dynamics of a free scalar field theory on the Moyal-Weyl deformed Minkowski spacetime is undeformed.

\subsection{Deformed Green's operators and the symplectic $\bbR[[\lambda]]$-module}
Let $(\mathcal{M},g)$ be a connected, time-oriented and globally hyperbolic smooth Lorentzian manifold.
The construction of the Green's operators corresponding to the deformed equation of motion operator (\ref{eqn:eomoperator})
is straightforward. 
We define
\begin{flalign}
\label{eqn:greenoperator}
 \widehat{\Delta}_{\star\pm}:= \Delta_{\pm}\circ \cos\left(\frac{\lambda c}{4}\left(D+2\right)X_1\right)^{-1}~,
\end{flalign}
where the inverse of $\cos\left(\frac{\lambda c}{4}\left(D+2\right)X_1\right)$ is understood
in terms of formal power series and $\Delta_\pm$ are the unique retarded and advanced
 Green's operators corresponding to the undeformed equation of motion operator $P=\square_g-\xi\,\mathfrak{R}$ \cite{baer}. We find
\begin{subequations}
\begin{flalign}
 \widehat{P}_\star\circ \widehat{\Delta}_{\star\pm} & = 
\text{id}\vert_{C^\infty_0(\mathcal{M})[[\lambda]]}~,\\
\widehat{\Delta}_{\star\pm}\circ \widehat{P}_\star\vert_{C^\infty_0(\mathcal{M})[[\lambda]]} 
&= \text{id}\vert_{C^\infty_0(\mathcal{M})[[\lambda]]}~,
\end{flalign}
and the support property
\begin{flalign}
 \text{supp}\bigl(\widehat{\Delta}_{\star\pm}(\varphi)\bigr)\subseteq J^\pm\bigl(\text{supp}(\varphi)\bigr)~,
\end{flalign}
\end{subequations}
for all $\varphi\in C_0^\infty(\mathcal{M})$, since the NC corrections to $\Delta_{\pm}$ are finite order differential operators
at every order in $\lambda$. 
Here $J^\pm(A)$ denotes the causal future/past of a subset $A\subseteq\mathcal{M}$ with respect to the metric field $g$.
Note that the support property relies on the use of formal power series.
Convergent deformations in general {\it do not} preserve classical causality, as it is shown in Section \ref{sec:convergent}
for explicit examples.
\begin{remark}
 In \cite{Ohl:2009qe} we have studied the Green's operators $\Delta_{\star\pm}$ of the equation of motion operator 
$P_\star$, see Remark \ref{remark:eomoperators}. For the class of models discussed in the present work
 the relation to the Green's operators $\widehat{\Delta}_{\star\pm}$ is given by
\begin{flalign}
 \Delta_{\star\pm}=\widehat{\Delta}_{\star\pm}\circ e^{\frac{i\lambda c D}{4}X_1}~.
\end{flalign}
\end{remark}

Following \cite{Ohl:2009qe} we can construct a deformed symplectic $\bbR[[\lambda]]$-module\footnote{
In deformation quantization the fields $\bbR$ and $\mathbb{C}$ are replaced by the
commutative and unital rings $\bbR[[\lambda]]$ and $\mathbb{C}[[\lambda]]$. 
Similarly, vector spaces over $\bbR$ and $\mathbb{C}$ are replaced by modules
over $\bbR[[\lambda]]$ and $\mathbb{C}[[\lambda]]$. 
A (weak) symplectic $\bbR[[\lambda]]$-module $(W,\rho)$ is an $\bbR[[\lambda]]$-module
$W$ with an antisymmetric and $\bbR[[\lambda]]$-bilinear map $\rho:W\times W\to\bbR[[\lambda]]$, such that
$\rho(\varphi,\psi)=0$ for all $\psi\in W$ implies $\varphi =0$. Following \cite{baer} we suppress the term {\it weak}.
In \cite{Ohl:2009qe} the symplectic $\bbR[[\lambda]]$-module was (with an abuse of notation) referred to 
symplectic vector space over $\bbR[[\lambda]]$, however in this work we intend to be more precise in notation.
}, which is isomorphic 
to the space of solutions of $P_\star$.
In terms of the unhatted quantities $\Delta_{\star\pm}$ and $P_\star$ the pre-symplectic $\bbR[[\lambda]]$-module is given by
$\bigl(H,\omega_\star\bigr)$, where 
\begin{flalign}
 H:=\bigl\lbrace\varphi\in C^\infty_0(\mathcal{M})[[\lambda]]:
\left(\Delta_{\star\pm}(\varphi)\right)^\ast=\Delta_{\star\pm}(\varphi)\bigr\rbrace
\end{flalign}
is the space of physical sources and
\begin{flalign}
 \omega_\star:H\times H\to\mathbb{R}[[\lambda]],~(\varphi,\psi)\mapsto \omega_\star(\varphi,\psi)
=\int\limits_\mathcal{M} \varphi^\ast\star\Delta_\star(\psi)\star\vol 
\end{flalign}
is an $\mathbb{R}[[\lambda]]$-bilinear and antisymmetric map. The deformed fundamental solution is defined
by $\Delta_\star:= \Delta_{\star+}-\Delta_{\star-}$.
\begin{propo}
\label{propo:hattedunhattedsymplecto}
For homothetic Killing deformations by two vector fields $X_1\in\mathfrak{K}$ and $X_2\in\mathfrak{H}$ we have
\begin{flalign}
 H=\left\lbrace e^{-\frac{i\lambda c D}{4}X_1} \varphi:\varphi\in C^\infty_0(\mathcal{M},\mathbb{R})[[\lambda]]\right\rbrace
\end{flalign}
and
\begin{flalign}
 \widehat{\omega}_\star(\varphi,\psi):=\omega_\star\left(e^{-\frac{i\lambda c D}{4}X_1} \varphi,e^{-\frac{i\lambda c D}{4}X_1} \psi\right)
=\int\limits_\mathcal{M} \varphi\,\widehat{\Delta}_{\star}(\psi)\,\vol~,
\end{flalign}
for all $\varphi,\psi\in C^\infty_0(\mathcal{M},\mathbb{R})[[\lambda]]$, where 
$\widehat{\Delta}_\star:=\widehat{\Delta}_{\star+}-\widehat{\Delta}_{\star-}$.

\noindent The map $C^\infty_0(\mathcal{M},\mathbb{R})[[\lambda]]\to H,~
\varphi\mapsto e^{-\frac{i\lambda c D}{4}X_1} \varphi$ is a symplectic isomorphism
between the pre-symplectic $\bbR[[\lambda]]$-modules 
$\bigl(C^\infty_0(\mathcal{M},\mathbb{R})[[\lambda]],\widehat{\omega}_\star\bigr)$
and $\bigl(H,\omega_\star\bigr)$. 
\end{propo}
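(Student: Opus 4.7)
The plan is to exploit the intertwining relation $\Delta_{\star\pm}=\widehat{\Delta}_{\star\pm}\circ e^{\frac{i\lambda cD}{4}X_1}$ from the preceding remark, so that everything reduces to the hatted Green's operators composed with an invertible conjugation by $A:=e^{-\frac{i\lambda cD}{4}X_1}$. Since $A$ is a $\mathbb{C}[[\lambda]]$-linear endomorphism of $C^\infty(\mathcal{M})[[\lambda]]$ that is term-wise a finite-order differential operator in $X_1$, it is invertible and preserves compact supports. Hence every $\widetilde{\varphi}\in C^\infty_0(\mathcal{M})[[\lambda]]$ is uniquely of the form $\widetilde{\varphi}=A\varphi$, and all three assertions translate into statements about $\varphi$.

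For the description of $H$, substituting $\widetilde\varphi=A\varphi$ gives $\Delta_{\star\pm}(\widetilde\varphi)=\widehat{\Delta}_{\star\pm}(\varphi)$. The operator $\widehat{\Delta}_{\star\pm}=\Delta_\pm\circ\cos\!\bigl(\tfrac{\lambda c(D+2)}{4}X_1\bigr)^{-1}$ carries only real coefficients in $\lambda$ (the cosine contains only even powers of $i$, and $X_1$, $\Delta_\pm$ are real), so it commutes with pointwise complex conjugation; moreover it is injective on $C^\infty_0(\mathcal{M})[[\lambda]]$ since $\widehat{P}_\star\widehat{\Delta}_{\star\pm}=\mathrm{id}$. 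Hence $\Delta_{\star\pm}(\widetilde\varphi)$ is real if and only if $\varphi=\varphi^\ast$, yielding the claimed description of $H$.

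For the explicit formula, substituting $\widetilde\varphi=A\varphi$ and $\widetilde\psi=A\psi$ with real $\varphi,\psi$ and using $(A\varphi)^\ast=A^{-1}\varphi$ together with $\Delta_\star\circ A=\widehat{\Delta}_\star$ yields $\widehat{\omega}_\star(\varphi,\psi)=\int_{\mathcal{M}} A^{-1}\varphi\star\widehat{\Delta}_\star(\psi)\star\vol$. I would then prove two auxiliary identities to eliminate the two $\star$-symbols. The first is $f\star\vol=A^{-1}(f)\,\vol$, obtained by expanding the twist on $f\otimes\vol$ and noting that $\mathcal{L}_{X_1}\vol=0$ kills every term in which $X_1$ hits $\vol$, while $\mathcal{L}_{X_2}\vol=(cD/2)\vol$ resums the survivors into the exponential. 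The second, for functions $u,v$ with at least one of compact support, is
\begin{flalign*}
\int_{\mathcal{M}} u\star v\,\vol=\int_{\mathcal{M}} u\cdot A^{-1}(v)\,\vol~,
\end{flalign*}
which I would establish by factoring the abelian twist as $\exp(\tfrac{i\lambda}{2}X_1\otimes X_2)\exp(-\tfrac{i\lambda}{2}X_2\otimes X_1)$, integrating by parts every $X_1$-derivative for free (Killing property), using the Leibniz rule to regroup the resulting $X_2$'s onto a single product, and applying $\int X_2^p(h)\vol=(-cD/2)^p\int h\,\vol$. Combining the two identities with $\int A^{-1}(f)\,\vol=\int f\,\vol$ and the multiplicative relation $A^{-1}(f)\cdot A^{-1}(g)=A^{-1}(fg)$ then collapses the expression to $\int\varphi\,\widehat{\Delta}_\star(\psi)\,\vol$.

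Finally, the map $T:\varphi\mapsto A\varphi$ is $\mathbb{R}[[\lambda]]$-linear (as $\lambda$ is a real formal parameter), bijective onto $H$ by the first step, and preserves the pre-symplectic forms by the very definition of $\widehat{\omega}_\star$. The main obstacle I foresee is the bookkeeping in the second integration identity: the alternating signs and binomial coefficients generated by expanding $(X_1\otimes X_2-X_2\otimes X_1)^n$ must collapse cleanly under the repeated integrations by parts, and it is precisely this cancellation that produces the single exponential $A^{-1}$ on the right-hand side.
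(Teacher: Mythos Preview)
Your argument is correct and follows essentially the same route as the paper. The characterization of $H$ is identical: both you and the paper reduce the reality condition on $\Delta_{\star\pm}(\widetilde\varphi)$ to the reality of $\varphi$ via the intertwining relation and the observation that $\widehat{\Delta}_{\star\pm}$ is real and injective. For the $\widehat{\omega}_\star$ formula, the paper simply invokes ``graded cyclicity'' (the standard fact that for abelian twists $\int\omega_1\wedge_\star\omega_2=\int\omega_1\wedge\omega_2$ on top forms) together with the homothetic Killing property $\mathcal{L}_{X_2}(\vol)=\tfrac{cD}{2}\vol$, whereas you unpack these into two explicit identities $f\star\vol=A^{-1}(f)\,\vol$ and $\int u\star v\,\vol=\int u\,A^{-1}(v)\,\vol$. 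Your worry about the bookkeeping in the second identity is unnecessary: it follows in one line from graded cyclicity and your first identity via $\int(u\star v)\,\vol=\int A^{-1}(u\star v)\,\vol=\int(u\star v)\star\vol=\int u\,(v\star\vol)=\int u\,A^{-1}(v)\,\vol$, so you need not chase the binomial expansion by hand.
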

\begin{proof}
 Let $\varphi\in C^\infty_0(\mathcal{M},\mathbb{R})[[\lambda]]$, then $e^{-\frac{i\lambda c D}{4}X_1}\varphi\in H$, since
\begin{flalign}
 \left(\Delta_{\star\pm}\left( e^{-\frac{i\lambda c D}{4}X_1} \varphi\right)\right)^\ast = 
\left(\widehat{\Delta}_{\star\pm}(\varphi)\right)^\ast
=\widehat{\Delta}_{\star\pm}\left(\varphi^\ast\right)=\widehat{\Delta}_{\star\pm}(\varphi)
=\Delta_{\star\pm}\left( e^{-\frac{i\lambda c D}{4}X_1}\varphi\right)~.
\end{flalign}
Let now $\varphi\in H$, then there is a $\psi\in C^\infty_0(\mathcal{M})[[\lambda]]$, such that 
$\varphi=e^{-\frac{i\lambda c D}{4}X_1} \psi$ (due to the invertability of $e^{-\frac{i\lambda c D}{4}X_1}$).
It holds $\psi^\ast=\psi$, since
\begin{flalign}
 \left(\Delta_{\star\pm}(\varphi)\right)^\ast=\Delta_{\star\pm}(\varphi)\quad \Leftrightarrow \quad
 \widehat{\Delta}_{\star\pm}\left(\psi^\ast-\psi\right)= 0\quad \Leftrightarrow\quad \psi^\ast-\psi=0~. 
\end{flalign}
Let $\varphi,\psi\in C^\infty_0(\mathcal{M},\mathbb{R})[[\lambda]]$, then
\begin{multline}
 \widehat{\omega}_\star(\varphi,\psi):= \omega_\star\left(e^{-\frac{i\lambda c D}{4}X_1}\varphi,e^{-\frac{i\lambda c D}{4}X_1}\psi\right)=
\int\limits_\mathcal{M}\left(e^{-\frac{i\lambda c D}{4}X_1}\varphi\right)^\ast\star\widehat{\Delta}_{\star}(\psi)\star\vol\\
~~\stackrel{\text{GC, REvol}}{=}~~\int\limits_\mathcal{M} \left(e^{-\frac{i\lambda c D}{4}X_1}\varphi\star \vol\right)^\ast
\,\widehat{\Delta}_\star(\psi)~~\stackrel{\text{HKP, REvol, RE}\varphi}{=}~~\int\limits_\mathcal{M}\varphi\,\widehat{\Delta}_\star(\psi)\,\vol~.
\end{multline}
In this derivation we have used graded cyclicity (GC), the reality of $\vol$ (REvol), the homothetic Killing property
$\mathcal{L}_{X_2}(\vol)=\frac{c D}{2}\,\vol$ (HKP) and the reality of $\varphi$ (RE$\varphi$).
Graded cyclicity allows us to perform (graded) cyclic permutations and remove one of the $\star$-products under the integral.
\end{proof}
The pre-symplectic $\bbR[[\lambda]]$-modules $\bigl(H,\omega_\star\bigr)$ and 
$\bigl(C^\infty_0(\mathcal{M},\mathbb{R})[[\lambda]],\widehat{\omega}_\star\bigr)$ can be made symplectic by factoring out
 the kernel of $\Delta_{\star}$ and $\widehat{\Delta}_\star$, respectively.
The kernel of $\Delta_{\star}$ is given by $P_\star[C^\infty_0(\mathcal{M},\mathbb{R})[[\lambda]]]$ and the one of
$\widehat{\Delta}_\star$ by $\widehat{P}_\star[C^\infty_0(\mathcal{M},\mathbb{R})[[\lambda]]]$.
Note that these kernels are related by the symplectic isomorphism of Proposition \ref{propo:hattedunhattedsymplecto}.
\begin{remark}
Let us make some comments on the physical interpretation of the pre-symplectic $\bbR[[\lambda]]$-modules
$\bigl(H,\omega_\star\bigr)$ and
$\bigl(C^\infty_0(\mathcal{M},\mathbb{R})[[\lambda]],\widehat{\omega}_\star\bigr)$.
Later in QFT the elements of $\bigl(H,\omega_\star\bigr)$ are interpreted as smearing functions of 
the operator valued distributions $\hat\Phi(x)$ (the hermitian field operators), where the smearing is given by
$ \hat\Phi(\varphi)=\int_\mathcal{M}\hat\Phi\star \varphi\star\vol$.
Due to the reality property of $H$ the operators $\hat\Phi(\varphi)$ are hermitian for all $\varphi\in H$.
On the other hand, the elements of $\bigl(C^\infty_0(\mathcal{M},\mathbb{R})[[\lambda]],\widehat{\omega}_\star\bigr)$
should be interpreted as smearing functions, where the smearing is given by
$ \hat\Phi(\varphi)=\int_\mathcal{M}\hat\Phi\, \varphi\,\vol$. Thus, we can use both versions of the pre-symplectic
$\bbR[[\lambda]]$-modules in order to define a QFT, but one has to cope with different interpretations.
\end{remark}

\subsection{Symplectic isomorphism between the deformed and undeformed symplectic $\bbR[[\lambda]]$-module}
Consider the pre-symplectic vector space of the commutative field theory $\bigl(C^\infty_0(\mathcal{M},\mathbb{R}),\omega\bigr)$,
where
\begin{flalign}
\omega:C^\infty_0(\mathcal{M},\mathbb{R})\times C^\infty_0(\mathcal{M},\mathbb{R})\to\mathbb{R},~(\varphi,\psi)\mapsto
\omega(\varphi,\psi)=\int\limits_\mathcal{M}\varphi\,\Delta(\psi)\,\vol~.
\end{flalign}
Here $\Delta:=\Delta_+ -\Delta_-$ is the fundamental solution corresponding to the undeformed equation of motion operator
$P=\square_g-\xi\,\mathfrak{R}$. We now show that there exists a symplectic isomorphism between the formal power series
extension of $\bigl(C^\infty_0(\mathcal{M},\mathbb{R}),\omega\bigr)$, i.e.~the pre-symplectic $\bbR[[\lambda]]$-module
$\bigl(C^\infty_0(\mathcal{M},\mathbb{R})[[\lambda]],\omega\bigr)$, and 
$\bigl(C^\infty_0(\mathcal{M},\mathbb{R})[[\lambda]],\widehat{\omega}_\star\bigr)$.
\begin{propo}
\label{propo:presymplecto}
 The $\bbR[[\lambda]]$-linear map 
\begin{flalign}
\label{eqn:symplecto}
 S:C^\infty_0(\mathcal{M},\mathbb{R})[[\lambda]]\to C^\infty_0(\mathcal{M},\mathbb{R})[[\lambda]],~\varphi\mapsto S\varphi
=\sqrt{\cos\left(\frac{\lambda c}{4}(D+2)X_1\right)^{-1}}\varphi
\end{flalign}
has the following properties:
\begin{itemize}
 \item[$(i)$] $S$ is invertible
\item[$(ii)$] $\omega(S\varphi,S\psi)=\widehat{\omega}_\star(\varphi,\psi)$, for all 
$\varphi,\psi\in C^\infty_0(\mathcal{M},\mathbb{R})[[\lambda]]$
\item[$(iii)$] $S(\text{Ker}(\Delta))=S^{-1}(\text{Ker}(\Delta))=\text{Ker}(\Delta)$
\end{itemize}
\end{propo}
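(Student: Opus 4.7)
The plan is to reduce all three items to two structural facts about the Killing generator $X_1\in\mathfrak{K}$. Since $\mathcal{L}_{X_1}(\vol)=0$, Stokes' theorem applied to $\int_\mathcal{M}\mathcal{L}_{X_1}(fg)\,\vol = \int_\mathcal{M} d\bigl(\iota_{X_1}(fg\,\vol)\bigr)=0$ for $f,g\in C^\infty_0(\mathcal{M})$ shows that $X_1$ is anti-self-adjoint with respect to the pairing $(f,g)\mapsto\int_\mathcal{M} f\,g\,\vol$. Moreover, because $X_1$ is an isometry, $\mathcal{L}_{X_1}$ commutes with $\square_g$ and annihilates $\mathfrak{R}$, so $X_1$ commutes with $P=\square_g-\xi\,\mathfrak{R}$ and, by uniqueness of the Green's operators under the causal support conditions, also with $\Delta_\pm$ and hence with $\Delta$.

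For $(i)$, write $T:=\cos\bigl(\frac{\lambda c}{4}(D+2)X_1\bigr)^{-1}$; this is a formal power series in $\lambda$ of the form $1+O(\lambda)$, so its square root $S=\sqrt{T}$ is defined by the binomial expansion and has inverse $S^{-1}=\sqrt{T^{-1}}=\sqrt{\cos\bigl(\frac{\lambda c}{4}(D+2)X_1\bigr)}$. At each order in $\lambda$, $S$ and $S^{-1}$ are finite-order differential operators with real coefficients, so they preserve both compact support and reality and therefore map $C^\infty_0(\mathcal{M},\mathbb{R})[[\lambda]]$ bijectively to itself.

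For $(ii)$, the decisive observation is that $\cos$ contains only even powers of its argument, so $S$ is a formal power series in $X_1^2$. By the first structural fact, $X_1^{2n}$ is self-adjoint under $\int\cdot\,\vol$, hence so is $S$; by the second, $S$ commutes with $\Delta$. Using $\widehat{\Delta}_\star=\Delta\circ T$ and $S^2=T$ one computes
\begin{align*}
\omega(S\varphi,S\psi) &= \int_\mathcal{M} S\varphi\,\Delta(S\psi)\,\vol = \int_\mathcal{M} S\varphi\cdot S(\Delta\psi)\,\vol\\
&= \int_\mathcal{M}\varphi\,S^2(\Delta\psi)\,\vol = \int_\mathcal{M}\varphi\,\widehat{\Delta}_\star(\psi)\,\vol = \widehat{\omega}_\star(\varphi,\psi).
\end{align*}
For $(iii)$, since both $S$ and $S^{-1}$ commute with $\Delta$, they map $\text{Ker}(\Delta)$ into itself; combined with the bijectivity from $(i)$ this forces the two inclusions to be equalities.

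The main conceptual point, rather than a technical obstacle, is realizing that $S$ depends on $X_1\in\mathfrak{K}$ only through \emph{even} powers: this single feature simultaneously supplies the self-adjointness needed to transfer $S$ from one slot of the pairing to the other, and, together with the Killing property of $X_1$, the commutation with $\Delta$ needed for $(ii)$ and $(iii)$. The homothetic vector $X_2$ does not enter $S$ explicitly; its role is absorbed into the real constant $c$ multiplying $X_1$ inside the cosine.
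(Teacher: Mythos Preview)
Your proof is correct and follows essentially the same route as the paper's: both arguments hinge on the observation that $S$ involves only even powers of the Killing field $X_1$, which yields self-adjointness with respect to $\int\cdot\,\vol$ via integration by parts, and on $[X_1,\Delta]=0$ together with $\mathcal{L}_{X_1}(\vol)=0$. The only cosmetic difference is the order of operations in $(ii)$: the paper first moves $S$ across by integration by parts and then commutes it past $\Delta$, whereas you commute first and then transfer $S$; the two computations are equivalent since in either case one factor retains compact support, so Stokes' theorem applies.
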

\begin{proof}
 $(i)$: The inverse of $S$ is given by $S^{-1}=\sqrt{\cos\left(\frac{\lambda c}{4}(D+2)X_1\right)}$.

$(ii)$: Let $\varphi,\psi\in C^\infty_0(\mathcal{M},\mathbb{R})[[\lambda]]$,
then
\begin{multline}
 \omega(S\varphi,S\psi)= \int\limits_\mathcal{M}S\varphi\,\Delta(S\psi)\,\vol~~ \stackrel{\text{PI}}{=} ~~
\int\limits_\mathcal{M}\varphi\,S\Bigl(\Delta(S\psi)\,\vol\Bigr)\\
 ~~\stackrel{\text{KP}}{=}~~\int\limits_\mathcal{M}\varphi\,S\Delta(S\psi)\,\vol ~~\stackrel{\text{KP}}{=}~~
\int\limits_\mathcal{M} \varphi\,\Delta(S^2\psi)\,\vol = \int\limits_\mathcal{M} \varphi\,\widehat{\Delta}_\star(\psi)\,\vol
=\widehat{\omega}_\star(\varphi,\psi)~.
\end{multline}
We have used integration by parts together with the fact that $S$ includes only even powers of $X_1$ (PI) 
and the Killing property of $X_1$, leading to $\mathcal{L}_{X_1}(\vol)=0$ and $[X_1,\Delta]=0$ (KP).

$(iii)$: We have $S^{\pm 1}(\text{Ker}(\Delta))\subseteq \text{Ker}(\Delta)$, since let $\varphi\in \text{Ker}(\Delta)$, then
$\Delta(S^{\pm1}\varphi)=S^{\pm1}\Delta(\varphi)=0$. We also have $S^{\pm1}(\text{Ker}(\Delta))\supseteq \text{Ker}(\Delta)$, since
let $\varphi\in\text{Ker}(\Delta)$, then there is $\psi=S^{\mp1}\varphi\in \text{Ker}(\Delta)$, such that
$\varphi=S^{\pm1}\psi$.
\end{proof}
Furthermore, it is easy to show that $\text{Ker}(\widehat{\Delta}_\star)=\text{Ker}(\Delta)$.
Let $\varphi\in\text{Ker}(\Delta)$, then
$\widehat{\Delta}_\star(\varphi)=\Delta(S^2\varphi)=S^2\Delta(\varphi)=0$. The other way around,
let $\varphi\in\text{Ker}(\widehat{\Delta}_\star)$, then $\Delta(\varphi)= S^{-2}\widehat{\Delta}_\star(\varphi)=0$.

This leads us to the following
\begin{cor}
\label{cor:symplecto}
Consider the symplectic $\bbR[[\lambda]]$-modules
$\bigl(\widehat{V}_\star,\widehat{\omega}_\star\bigr):= 
\bigl(C^\infty_0(\mathcal{M},\mathbb{R})[[\lambda]]/\text{Ker}(\widehat{\Delta}_\star),
\widehat{\omega}_\star\bigr)$ and $\bigl(V,\omega\bigr) := 
\bigl(C^\infty_0(\mathcal{M},\mathbb{R})[[\lambda]]/\text{Ker}(\Delta),\omega\bigr)$.
Then the map
\begin{flalign}
 \mathcal{S}:\widehat{V}_\star \to V,~[\varphi]\mapsto \mathcal{S}[\varphi]=[S\varphi]
\end{flalign}
is a symplectic isomorphism.
\end{cor}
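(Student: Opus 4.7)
The plan is to verify the four standard properties required for $\mathcal{S}$ to be a symplectic isomorphism of $\bbR[[\lambda]]$-modules: well-definedness, $\bbR[[\lambda]]$-linearity, preservation of the symplectic forms, and bijectivity. All of the analytic content has already been established in Proposition~\ref{propo:presymplecto} together with the identity $\text{Ker}(\widehat{\Delta}_\star)=\text{Ker}(\Delta)$ shown immediately after it, so the corollary is essentially a descent-to-quotient argument.

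First I would check that $\mathcal{S}$ is well-defined on equivalence classes. Given $\varphi,\varphi^\prime\in C^\infty_0(\mathcal{M},\bbR)[[\lambda]]$ with $[\varphi]=[\varphi^\prime]$ in $\widehat{V}_\star$, the difference $\varphi-\varphi^\prime$ lies in $\text{Ker}(\widehat{\Delta}_\star)=\text{Ker}(\Delta)$. By property $(iii)$ of Proposition~\ref{propo:presymplecto}, $S$ maps $\text{Ker}(\Delta)$ into itself, so $S\varphi-S\varphi^\prime=S(\varphi-\varphi^\prime)\in\text{Ker}(\Delta)$, which gives $[S\varphi]=[S\varphi^\prime]$ in $V$. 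Linearity over $\bbR[[\lambda]]$ is then immediate from the $\bbR[[\lambda]]$-linearity of $S$ and of the quotient projections.

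Next I would verify that $\mathcal{S}$ preserves the symplectic structure. For $[\varphi],[\psi]\in\widehat{V}_\star$ one computes
\begin{flalign}
\omega(\mathcal{S}[\varphi],\mathcal{S}[\psi])=\omega(S\varphi,S\psi)=\widehat{\omega}_\star(\varphi,\psi)~,
\end{flalign}
where the first equality is the definition of the quotient form $\omega$ on $V$ (which is well-defined because $\omega(\cdot,\cdot)$ vanishes whenever one argument lies in $\text{Ker}(\Delta)$), and the second equality is property $(ii)$ of Proposition~\ref{propo:presymplecto}. Since $\widehat{\omega}_\star$ descends analogously to $\widehat{V}_\star$ (its kernel contains $\text{Ker}(\widehat{\Delta}_\star)$), this establishes $\omega\circ(\mathcal{S}\times\mathcal{S})=\widehat{\omega}_\star$ on the quotient.

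Finally, for bijectivity I would exhibit the inverse directly. By property $(i)$, $S$ has the two-sided inverse $S^{-1}=\sqrt{\cos\bigl(\tfrac{\lambda c}{4}(D+2)X_1\bigr)}$, and by property $(iii)$ this inverse again preserves $\text{Ker}(\Delta)=\text{Ker}(\widehat{\Delta}_\star)$, so the assignment $[\varphi]\mapsto [S^{-1}\varphi]$ is a well-defined $\bbR[[\lambda]]$-linear map $V\to\widehat{V}_\star$ inverse to $\mathcal{S}$. There is no real obstacle here; the only point requiring care is that the two quotients are taken by different kernels a priori, and the identification $\text{Ker}(\widehat{\Delta}_\star)=\text{Ker}(\Delta)$ established just before the corollary is precisely what makes the descent compatible with $S$. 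That identification, in turn, rests on the intertwining $\widehat{\Delta}_\star=\Delta\circ S^2=S^2\circ\Delta$, which is the one nontrivial input and has already been used in the proof of Proposition~\ref{propo:presymplecto}.
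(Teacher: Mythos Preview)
Your argument is correct and follows essentially the same route as the paper's proof: well-definedness via $S(\text{Ker}(\widehat{\Delta}_\star))\subseteq\text{Ker}(\Delta)$, the explicit inverse $[\varphi]\mapsto[S^{-1}\varphi]$, and the symplectic property from Proposition~\ref{propo:presymplecto}(ii). The paper's version is simply more terse, omitting the linearity check and the detailed discussion of the kernel identification that you spell out.
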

\begin{proof}
 The map is well-defined, since $S(\text{Ker}(\widehat{\Delta}_\star))\subseteq \text{Ker}(\Delta)$. 
The inverse map is given by
\begin{flalign}
 \mathcal{S}^{-1}:V\to \widehat{V}_\star,~[\varphi]\mapsto \mathcal{S}^{-1}[\varphi] = [S^{-1}\varphi]~,
\end{flalign}
and is well-defined, too. Using Proposition \ref{propo:presymplecto} we find
$\omega(\mathcal{S}[\varphi],\mathcal{S}[\psi])= \widehat{\omega}_\star([\varphi],[\psi])$, for all $[\varphi],[\psi]\in 
\widehat{V}_\star$.
\end{proof}
Note that $\widehat{V}_\star=V$ as an $\bbR[[\lambda]]$-module.
In the following we will drop for notational reasons the brackets $[\,\cdot\,]$ indicating equivalence classes.

\subsection{Deformed $\ast$-algebra of field polynomials}
Making use of the deformed symplectic $\bbR[[\lambda]]$-module 
$\bigl(\widehat{V}_\star,\widehat{\omega}_\star\bigr)$ we can define 
the $\ast$-algebra of field observables for the NC QFT \cite{Ohl:2009qe}. Since our present focus is on
formal deformation quantization, the preferred choice of an algebra of observables is the $\ast$-algebra
of field polynomials (also called the Borchers-Uhlmann algebra) and not the Weyl algebra. 
This is due to the fact that formal power series prohibit the use of $C^\ast$-algebras.
For a review on algebras, states and representations in deformation quantization see \cite{waldmann}.
We make the following
\begin{defi}
\label{def:fieldpoly}
 Let $(W,\rho)$ be a symplectic $\bbR[[\lambda]]$-module.
 Let $\mathcal{A}_\text{free}$ be the unital $\ast$-algebra over $\mathbb{C}[[\lambda]]$ which is
 freely generated by the elements $1$, $\Phi( \varphi )$ and $\Phi(\varphi)^\ast$, $\varphi \in W$,
 and let $\mathcal{I}$ be the $\ast$-ideal generated by the elements
\begin{subequations}
\label{eqn:fieldalgebra}
\begin{flalign}
&~~\label{eqn:fieldoplin}\Phi(\gamma\,\varphi+\gamma^\prime\,\psi)-\gamma\,\Phi(\varphi)-\gamma^\prime\,\Phi(\psi)~,\\
&~~\Phi(\varphi)^\ast-\Phi(\varphi)~,\\
&~~[\Phi(\varphi),\Phi(\psi)]- i\,\rho(\varphi,\psi)\, 1~,
\end{flalign}
\end{subequations}
for all $\varphi,\psi\in W$ and $\gamma,\gamma^\prime\in \mathbb{R}[[\lambda]]$.
The $\ast$-algebra of field polynomials is defined as the quotient $\mathcal{A}_{(W,\rho)}:= \mathcal{A}_\text{free}/\mathcal{I}$.
\end{defi}
For our investigations we require the following canonical construction.
\begin{lem}
\label{lem:symplectoalgeba}
Let $(W_1,\rho_1)$ and $(W_2,\rho_2)$ be two symplectic $\bbR[[\lambda]]$-modules 
and let $S:W_1\to W_2$ be a symplectic isomorphism. Then the map
$\mathfrak{S}:\mathcal{A}_{(W_1,\rho_1)}\to \mathcal{A}_{(W_2,\rho_2)}$, which is defined on the generators by
\begin{subequations}
\begin{flalign}
 \mathfrak{S}(1)&=1~,\\
 \mathfrak{S}\bigl(\Phi(\varphi)\bigr)&=\Phi(S\varphi)~,~\forall \varphi\in W_1~,
\end{flalign}
\end{subequations}
and extended to $\mathcal{A}_{(W_1,\rho_1)}$ as a $\ast$-algebra homomorphism 
is a $\ast$-algebra isomorphism.
\end{lem}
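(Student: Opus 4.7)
The plan is to construct $\mathfrak{S}$ first on the free $\ast$-algebra $\mathcal{A}_\text{free}^{(1)}$ associated with $(W_1,\rho_1)$ and then show it descends to the quotient by the ideal $\mathcal{I}_1$, landing inside the analogous quotient for $(W_2,\rho_2)$. By the universal property of the free unital $\ast$-algebra generated by symbols $\lbrace \Phi(\varphi)\rbrace_{\varphi\in W_1}$ together with their adjoints, the assignments $1\mapsto 1$ and $\Phi(\varphi)\mapsto \Phi(S\varphi)$ extend uniquely to a unital $\ast$-algebra homomorphism $\widetilde{\mathfrak{S}}:\mathcal{A}_\text{free}^{(1)}\to \mathcal{A}_{(W_2,\rho_2)}$.

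Next I would verify that $\widetilde{\mathfrak{S}}$ sends each of the three generators of $\mathcal{I}_1$ in Definition \ref{def:fieldpoly} to zero in $\mathcal{A}_{(W_2,\rho_2)}$. The linearity generator (\ref{eqn:fieldoplin}) is mapped to $\Phi\bigl(S(\gamma\,\varphi+\gamma^\prime\,\psi)\bigr)-\gamma\,\Phi(S\varphi)-\gamma^\prime\,\Phi(S\psi)$, which vanishes in $\mathcal{A}_{(W_2,\rho_2)}$ because $S$ is $\bbR[[\lambda]]$-linear and the corresponding relation holds there. The hermiticity generator $\Phi(\varphi)^\ast-\Phi(\varphi)$ is mapped to $\Phi(S\varphi)^\ast-\Phi(S\varphi)$, which vanishes by the hermiticity relation in $\mathcal{A}_{(W_2,\rho_2)}$. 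For the commutation generator I would use the defining property of a symplectic isomorphism, $\rho_2(S\varphi,S\psi)=\rho_1(\varphi,\psi)$, so $[\Phi(\varphi),\Phi(\psi)]-i\,\rho_1(\varphi,\psi)\,1$ is mapped to $[\Phi(S\varphi),\Phi(S\psi)]-i\,\rho_2(S\varphi,S\psi)\,1$, which is in the ideal of the target algebra. Hence $\widetilde{\mathfrak{S}}$ factors through $\mathcal{I}_1$ and induces a $\ast$-algebra homomorphism $\mathfrak{S}:\mathcal{A}_{(W_1,\rho_1)}\to \mathcal{A}_{(W_2,\rho_2)}$ with the asserted action on generators.

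To upgrade $\mathfrak{S}$ to an isomorphism, I would run exactly the same construction with the symplectic isomorphism $S^{-1}:W_2\to W_1$ (which is symplectic because $S$ is, and $\bbR[[\lambda]]$-linear because $S$ is) to obtain a $\ast$-algebra homomorphism $\mathfrak{T}:\mathcal{A}_{(W_2,\rho_2)}\to \mathcal{A}_{(W_1,\rho_1)}$ with $\mathfrak{T}(1)=1$ and $\mathfrak{T}\bigl(\Phi(\chi)\bigr)=\Phi(S^{-1}\chi)$ for all $\chi\in W_2$. Since $\mathfrak{S}\circ\mathfrak{T}$ and $\mathfrak{T}\circ\mathfrak{S}$ act as the identity on the generating sets $\lbrace 1,\Phi(\chi)\rbrace_{\chi\in W_2}$ and $\lbrace 1,\Phi(\varphi)\rbrace_{\varphi\in W_1}$ respectively, and since both are $\ast$-algebra homomorphisms of unital algebras generated by these sets, they coincide with the respective identity maps. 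Thus $\mathfrak{T}=\mathfrak{S}^{-1}$ and $\mathfrak{S}$ is a $\ast$-algebra isomorphism.

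There is no real obstacle here; the content of the lemma is essentially a functoriality statement for the assignment $(W,\rho)\mapsto \mathcal{A}_{(W,\rho)}$, and the only nontrivial input is that $S$ being symplectic is precisely what makes the commutation relation compatible with the assignment on generators. The mildly delicate point to state explicitly is the well-definedness step, i.e.\ that $\widetilde{\mathfrak{S}}$ maps $\mathcal{I}_1$ into $\mathcal{I}_2$; all other verifications reduce to evaluating a $\ast$-algebra homomorphism on the finite generating set.
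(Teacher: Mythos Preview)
Your proof is correct and follows essentially the same approach as the paper: both verify that the three families of relations in Definition~\ref{def:fieldpoly} are respected by the assignment $\Phi(\varphi)\mapsto\Phi(S\varphi)$, and both obtain the inverse by running the construction with $S^{-1}$. Your version is slightly more explicit about the universal property of the free $\ast$-algebra and the passage to the quotient, but the logical content is the same.
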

\begin{proof}
It has to be shown that the relations (\ref{eqn:fieldalgebra}) are compatible with the map $\mathfrak{S}$.
By definition, the map $\mathfrak{S}$ is compatible with $\mathbb{R}[[\lambda]]$-linearity (\ref{eqn:fieldoplin}). We also find
\begin{flalign}
 \mathfrak{S}\bigl(\Phi(\varphi)\bigr)^\ast = \Phi(S\varphi)^\ast= \Phi(S\varphi) 
= \mathfrak{S}\bigl(\Phi(\varphi)\bigr)~,
\end{flalign}
and 
\begin{flalign}
\nonumber \mathfrak{S}\bigl(\bigl[\Phi(\varphi),\Phi(\psi)\bigr]\bigr)&=
\left[\mathfrak{S}\bigl(\Phi(\varphi)\bigr),\mathfrak{S}\bigl(\Phi(\psi)\bigr)\right] 
=\left[\Phi(S\varphi),\Phi(S\psi)\right]\\
&= i\,\rho_2(S\varphi,S\psi)\,1=
\mathfrak{S}\bigl( i\,\rho_1(\varphi,\psi)\,1 \bigr)~, 
\end{flalign}
for all $\varphi,\psi\in W_1$. Thus, the map $\mathfrak{S}$ can be extended  to a well-defined 
$\ast$-algebra homomorphism between $\mathcal{A}_{(W_1,\rho_1)}$ and $\mathcal{A}_{(W_2,\rho_2)}$.
$\mathfrak{S}$ is invertible, as shown by using the inverse $S^{-1}$ of $S$, and thus is a
$\ast$-algebra isomorphism between $\mathcal{A}_{(W_1,\rho_1)}$ and $\mathcal{A}_{(W_2,\rho_2)}$.
\end{proof}

Consider the $\ast$-algebras of field polynomials of the undeformed and the deformed 
QFT given by $\mathcal{A}_{(V,\omega)}$ and $\mathcal{A}_{(\widehat{V}_\star,\widehat{\omega}_\star)}$,
respectively. We obtain by applying Lemma \ref{lem:symplectoalgeba} and Corollary \ref{cor:symplecto}
\begin{cor}
\label{cor:fieldpolyiso}
There exists a $\ast$-algebra isomorphism $\mathfrak{S}:\mathcal{A}_{(\widehat{V}_\star,\widehat{\omega}_\star)}
\to\mathcal{A}_{(V,\omega)}$.
\end{cor}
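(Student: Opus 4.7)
The plan is essentially to observe that the corollary is an immediate combination of the two results in hand, so the task reduces to verifying that the hypotheses of Lemma \ref{lem:symplectoalgeba} are met by the symplectic isomorphism supplied by Corollary \ref{cor:symplecto}. Concretely, I would take $W_1 = \widehat{V}_\star$ with $\rho_1 = \widehat{\omega}_\star$ and $W_2 = V$ with $\rho_2 = \omega$, both of which are symplectic $\bbR[[\lambda]]$-modules after having quotiented out the appropriate kernels in the previous subsection.

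Next, I would identify the required symplectic isomorphism $S: W_1 \to W_2$ with the map $\mathcal{S}: \widehat{V}_\star \to V$ from Corollary \ref{cor:symplecto}. This map is $\bbR[[\lambda]]$-linear, invertible with explicit inverse $\mathcal{S}^{-1}[\varphi] = [S^{-1}\varphi]$, and satisfies $\omega(\mathcal{S}[\varphi], \mathcal{S}[\psi]) = \widehat{\omega}_\star([\varphi], [\psi])$; these are exactly the properties needed to apply Lemma \ref{lem:symplectoalgeba}.

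Finally, I would invoke Lemma \ref{lem:symplectoalgeba} with this data to obtain a $\ast$-algebra homomorphism $\mathfrak{S}: \mathcal{A}_{(\widehat{V}_\star,\widehat{\omega}_\star)} \to \mathcal{A}_{(V,\omega)}$ defined on generators by $\mathfrak{S}(1) = 1$ and $\mathfrak{S}(\Phi(\varphi)) = \Phi(\mathcal{S}\varphi)$, which the lemma guarantees is in fact a $\ast$-algebra isomorphism. There is no real obstacle here: both pieces of machinery have been set up precisely so that this step is a one-line combination. The only thing worth double-checking is the bookkeeping for equivalence classes, namely that representatives can be chosen coherently so the formula on generators descends unambiguously to the quotients $\widehat{V}_\star$ and $V$, but this is already ensured by the well-definedness of $\mathcal{S}$ and $\mathcal{S}^{-1}$ established in Corollary \ref{cor:symplecto}.
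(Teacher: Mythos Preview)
Your proposal is correct and matches the paper's approach exactly: the paper simply states that the corollary follows by applying Lemma~\ref{lem:symplectoalgeba} and Corollary~\ref{cor:symplecto}, which is precisely the combination you spell out in detail.
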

 Note that this corollary states that we can {\it mathematically} describe the NC QFT in terms of a 
formal power series extension of the corresponding commutative QFT.
 However, the {\it physical interpretation} has to be adapted properly: If we want to probe the 
NC QFT with a set of smearing functions $\lbrace \varphi_i:i\in\mathcal{I}\rbrace$ in order to extract
physical observables (e.g.~Wightman functions) we have to probe the commutative QFT with a different set of smearing 
functions $\lbrace S\varphi_i: i\in\mathcal{I} \rbrace$, with $S$ given in (\ref{eqn:symplecto}).
\begin{remark}
 In algebraic QFT the main object of interest is the net of algebras of local observables \cite{Haag:1992hx,Araki:1999ar}.
 See also \cite{baer} for the construction of a net of local algebras for commutative, free and bosonic QFTs.
 The symplectic isomorphism $\mathcal{S}$ of Corollary \ref{cor:symplecto} is a finite order differential operator
 at every order in $\lambda$. Thus, 
 in the framework of formal power series, the relation between the undeformed and deformed QFT is local, meaning that there is
 an isomorphism between the corresponding nets.
 This feature does not carry over to convergent deformations, since there the symplectic isomorphism is
 a nonlocal map, see Section \ref{sec:convergent}.
\end{remark}

\subsection{Symmetries of the deformed and undeformed QFT}
We study relations among the symmetries of the $\ast$-algebra of field polynomials of the undeformed and deformed
QFT. The symmetries we consider are of the following type:
\begin{defi}
\label{def:symplecticauto}
 Let $(W,\rho)$ be a symplectic $\bbR[[\lambda]]$-module. The set of
 symplectic automorphisms $\mathcal{G}_{(W,\rho)}\subseteq \text{End}_{\mathbb{R}[[\lambda]]}(W)$, 
i.e.~the set of all invertible $\alpha\in\text{End}_{\mathbb{R}[[\lambda]]}(W)$ satisfying
$\rho\bigl(\alpha\varphi,\alpha\psi\bigr)=\rho(\varphi,\psi)$, for all $\varphi,\psi\in W$, 
forms a group under the usual composition of homomorphisms $\circ$.
This group is called the group of symplectic automorphisms.
\end{defi}
For completeness we state (without proof) two simple lemmas which are important for the following investigations.
\begin{lem}
 Let $(W_1,\rho_1)$ and $(W_2,\rho_2)$ be two symplectic $\bbR[[\lambda]]$-modules
and let $S:W_1\to W_2$ be a symplectic isomorphism. Then the map
\begin{flalign}
S_\mathcal{G}:\text{End}_{\mathbb{R}[[\lambda]]}(W_2)\to\text{End}_{\mathbb{R}[[\lambda]]}(W_1),
~\alpha\mapsto S^{-1}\circ \alpha\circ S
\end{flalign}
provides a group isomorphism between $\mathcal{G}_{(W_2,\rho_2)}$ and $\mathcal{G}_{(W_1,\rho_1)}$.
\end{lem}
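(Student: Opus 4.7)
The plan is to verify the three claims packaged into the statement: that $S_\mathcal{G}$ restricted to $\mathcal{G}_{(W_2,\rho_2)}$ lands in $\mathcal{G}_{(W_1,\rho_1)}$, that it respects composition and the identity, and that it is bijective. All three reduce to manipulations with $S$ and $S^{-1}$ using the definitions in Definition \ref{def:symplecticauto}.

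First, I would take any $\alpha\in\mathcal{G}_{(W_2,\rho_2)}$ and show $S_\mathcal{G}(\alpha)=S^{-1}\circ\alpha\circ S$ lies in $\mathcal{G}_{(W_1,\rho_1)}$. Invertibility is automatic: each factor is an $\mathbb{R}[[\lambda]]$-linear bijection, with explicit inverse $S^{-1}\circ\alpha^{-1}\circ S$. For the symplectic condition I compute, for arbitrary $\varphi,\psi\in W_1$,
\begin{flalign*}
\rho_1\bigl(S^{-1}\alpha S\varphi,\,S^{-1}\alpha S\psi\bigr)=\rho_2\bigl(\alpha S\varphi,\,\alpha S\psi\bigr)=\rho_2(S\varphi,S\psi)=\rho_1(\varphi,\psi),
\end{flalign*}
where the first and third equalities use that $S$ is a symplectic isomorphism (so $\rho_1(\cdot,\cdot)=\rho_2(S\cdot,S\cdot)$, applied once with argument $S^{-1}\alpha S\varphi$ and once with argument $\varphi$), and the middle equality uses $\alpha\in\mathcal{G}_{(W_2,\rho_2)}$.

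Next, the homomorphism property is a one-line associativity check: for $\alpha,\beta\in\mathcal{G}_{(W_2,\rho_2)}$,
\begin{flalign*}
S_\mathcal{G}(\alpha\circ\beta)=S^{-1}\circ\alpha\circ\beta\circ S=(S^{-1}\circ\alpha\circ S)\circ(S^{-1}\circ\beta\circ S)=S_\mathcal{G}(\alpha)\circ S_\mathcal{G}(\beta),
\end{flalign*}
and $S_\mathcal{G}(\mathrm{id}_{W_2})=S^{-1}\circ S=\mathrm{id}_{W_1}$. Bijectivity is then established by exhibiting the two-sided inverse $\beta\mapsto S\circ\beta\circ S^{-1}$, which by the symmetric version of the first step (using that $S^{-1}:W_2\to W_1$ is itself a symplectic isomorphism) maps $\mathcal{G}_{(W_1,\rho_1)}$ into $\mathcal{G}_{(W_2,\rho_2)}$; the compositions with $S_\mathcal{G}$ in both orders yield the identity on endomorphisms.

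There is no substantive obstacle here: the argument is pure conjugation bookkeeping, and the only subtlety worth flagging is the need to use the symplectic condition on $S$ in both directions, which is legitimate because $S$ is assumed invertible (so $S^{-1}$ inherits the symplectic intertwining property). For this reason I expect the proof to be essentially immediate from the definitions, which is presumably why the authors state the lemma without proof.
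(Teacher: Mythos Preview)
Your proof is correct, and as you anticipated, the paper states this lemma explicitly without proof (it is one of two ``simple lemmas'' the authors announce but do not demonstrate). Your conjugation argument is exactly the standard verification one would supply, so there is nothing to compare.
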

\begin{lem}
The group $\mathcal{G}_{(W,\rho)}$ can be represented on the $\ast$-algebra of field polynomials $\mathcal{A}_{(W,\rho)}$ 
by defining the action on the generators
\begin{subequations}
\begin{flalign}
 \pi_\alpha(1)&=1~,\\
\pi_\alpha\bigl(\Phi(\varphi)\bigr)&=\Phi\bigl(\alpha\varphi\bigr)~,~\forall\varphi\in W~,
\end{flalign}
\end{subequations}
for all $\alpha\in \mathcal{G}_{(W,\rho)}$, and extending to $\mathcal{A}_{(W,\rho)}$ as $\ast$-algebra homomorphisms.
\end{lem}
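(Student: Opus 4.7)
The proof would proceed in essentially the same style as Lemma \ref{lem:symplectoalgeba}, so the plan is to reduce everything to checking compatibility of $\pi_\alpha$ with the generating relations of the ideal $\mathcal{I}$ and then verifying the group homomorphism property. First I would define $\pi_\alpha$ on the free $\ast$-algebra $\mathcal{A}_\text{free}$ by the prescribed values on the generators $1$, $\Phi(\varphi)$ and $\Phi(\varphi)^\ast$ (extending via products and $\mathbb{C}[[\lambda]]$-linearity), which is well-defined because $\mathcal{A}_\text{free}$ is by definition freely generated. The task is then to show that $\pi_\alpha$ descends to the quotient $\mathcal{A}_{(W,\rho)}=\mathcal{A}_\text{free}/\mathcal{I}$, i.e.~that $\pi_\alpha(\mathcal{I})\subseteq \mathcal{I}$.

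For this one checks the three generators of $\mathcal{I}$ one by one. The relation \eqref{eqn:fieldoplin} is preserved because $\alpha\in\text{End}_{\mathbb{R}[[\lambda]]}(W)$, so $\alpha(\gamma\varphi+\gamma^\prime\psi)=\gamma\,\alpha\varphi+\gamma^\prime\,\alpha\psi$ and hence $\pi_\alpha\bigl(\Phi(\gamma\,\varphi+\gamma^\prime\,\psi)-\gamma\,\Phi(\varphi)-\gamma^\prime\,\Phi(\psi)\bigr)\in\mathcal{I}$. The hermiticity relation is preserved because $\alpha\varphi\in W$, so $\Phi(\alpha\varphi)^\ast-\Phi(\alpha\varphi)\in\mathcal{I}$. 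The commutator relation is the place where the symplectic (rather than merely linear) character of $\alpha$ is used: applying $\pi_\alpha$ gives $[\Phi(\alpha\varphi),\Phi(\alpha\psi)]- i\,\rho(\varphi,\psi)\,1$, and using $\rho(\alpha\varphi,\alpha\psi)=\rho(\varphi,\psi)$ this equals $[\Phi(\alpha\varphi),\Phi(\alpha\psi)]- i\,\rho(\alpha\varphi,\alpha\psi)\,1\in\mathcal{I}$. Together this shows that $\pi_\alpha$ is a well-defined $\ast$-algebra homomorphism of $\mathcal{A}_{(W,\rho)}$.

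Finally I would verify that $\alpha\mapsto \pi_\alpha$ is a group homomorphism into the automorphism group of $\mathcal{A}_{(W,\rho)}$. Since both sides are $\ast$-algebra homomorphisms, it suffices to check equality on the generators: $\pi_{\alpha\circ\beta}\bigl(\Phi(\varphi)\bigr)=\Phi\bigl((\alpha\circ\beta)\varphi\bigr)=\Phi\bigl(\alpha(\beta\varphi)\bigr)=\pi_\alpha\bigl(\Phi(\beta\varphi)\bigr)=(\pi_\alpha\circ\pi_\beta)\bigl(\Phi(\varphi)\bigr)$, and trivially $\pi_\text{id}=\text{id}_{\mathcal{A}_{(W,\rho)}}$. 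This immediately yields that each $\pi_\alpha$ is invertible with inverse $\pi_{\alpha^{-1}}$, so the image lies in $\text{Aut}(\mathcal{A}_{(W,\rho)})$ as required.

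There is no real obstacle here; the only slightly delicate point is to keep straight that the compatibility with the commutator relation is precisely the symplectic condition in Definition \ref{def:symplecticauto}, so that the representation genuinely requires $\alpha\in\mathcal{G}_{(W,\rho)}$ rather than an arbitrary $\mathbb{R}[[\lambda]]$-linear endomorphism. Once this is observed, the remainder is a routine verification entirely parallel to the one already given in Lemma \ref{lem:symplectoalgeba}.
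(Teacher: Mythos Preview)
Your proof is correct. Note that the paper actually states this lemma without proof (it is introduced as one of ``two simple lemmas'' given ``for completeness''), so there is no original argument to compare against; your verification of compatibility with the ideal $\mathcal{I}$ and of the group homomorphism property is exactly the routine check the paper omits, and it parallels the proof of Lemma~\ref{lem:symplectoalgeba} as you indicate.
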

We denote by $\mathcal{G}:=\mathcal{G}_{(V,\omega)}$ and $\widehat{\mathcal{G}}_\star:=
\mathcal{G}_{(\widehat{V}_\star,\widehat{\omega}_\star)}$ the group of symplectic automorphisms 
of the symplectic $\bbR[[\lambda]]$-modules $(V,\omega)$ and $\bigl(\widehat{V}_\star,\widehat{\omega}_\star\bigr)$, 
respectively.
From the lemmas above and Corollary \ref{cor:symplecto} we obtain
\begin{cor}
\label{cor:symplecticgroup}
There exists a group isomorphism $S_\mathcal{G}:\mathcal{G}\to\widehat{\mathcal{G}}_\star
,~\alpha\mapsto \mathcal{S}^{-1}\circ 
\alpha\circ \mathcal{S}$.
\end{cor}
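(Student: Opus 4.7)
The plan is to observe that this corollary is an immediate application of the (unnamed) lemma stated just above it, once one supplies the correct symplectic isomorphism. All of the substantive work has already been done earlier: Corollary \ref{cor:symplecto} produces the symplectic isomorphism $\mathcal{S}:\widehat{V}_\star\to V$ between the deformed and undeformed symplectic $\bbR[[\lambda]]$-modules, and the preceding lemma asserts, in full generality, that any symplectic isomorphism $S:W_1\to W_2$ induces a group isomorphism $S_\mathcal{G}:\mathcal{G}_{(W_2,\rho_2)}\to\mathcal{G}_{(W_1,\rho_1)}$ via conjugation $\alpha\mapsto S^{-1}\circ\alpha\circ S$.

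Concretely, I would specialize the lemma to $W_1=\widehat{V}_\star$, $\rho_1=\widehat{\omega}_\star$, $W_2=V$, $\rho_2=\omega$, and $S=\mathcal{S}$. The induced map is then precisely the conjugation $\alpha\mapsto \mathcal{S}^{-1}\circ\alpha\circ\mathcal{S}$ claimed in the statement, and it maps $\mathcal{G}=\mathcal{G}_{(V,\omega)}$ isomorphically onto $\widehat{\mathcal{G}}_\star=\mathcal{G}_{(\widehat{V}_\star,\widehat{\omega}_\star)}$.

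For the reader's convenience I would briefly verify the three items the lemma requires in this case: (i) well-definedness, i.e.\ that $\mathcal{S}^{-1}\circ\alpha\circ\mathcal{S}$ is an $\bbR[[\lambda]]$-linear endomorphism of $\widehat{V}_\star$, invertible with inverse $\mathcal{S}^{-1}\circ\alpha^{-1}\circ\mathcal{S}$; (ii) symplecticity, $\widehat{\omega}_\star(\mathcal{S}^{-1}\alpha\mathcal{S}\,\varphi,\mathcal{S}^{-1}\alpha\mathcal{S}\,\psi)=\omega(\alpha\mathcal{S}\varphi,\alpha\mathcal{S}\psi)=\omega(\mathcal{S}\varphi,\mathcal{S}\psi)=\widehat{\omega}_\star(\varphi,\psi)$, using the symplectic property of $\mathcal{S}$ from Corollary \ref{cor:symplecto} together with $\alpha\in\mathcal{G}$; and (iii) the homomorphism and bijection properties, which reduce to the telescoping identity $(\mathcal{S}^{-1}\alpha\mathcal{S})\circ(\mathcal{S}^{-1}\beta\mathcal{S})=\mathcal{S}^{-1}(\alpha\circ\beta)\mathcal{S}$ and the manifest two-sided inverse $\widehat{\alpha}\mapsto\mathcal{S}\circ\widehat{\alpha}\circ\mathcal{S}^{-1}$.

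There is essentially no obstacle to overcome at this stage; the corollary is a routine corollary in the literal sense. The only place where something nontrivial could have gone wrong was in Corollary \ref{cor:symplecto}, where one had to check that the operator $S=\sqrt{\cos(\tfrac{\lambda c}{4}(D+2)X_1)^{-1}}$ genuinely descends to a symplectic isomorphism of quotients and relates $\widehat{\omega}_\star$ to $\omega$; once that is in hand, Corollary \ref{cor:symplecticgroup} is just functoriality of the assignment $(W,\rho)\mapsto \mathcal{G}_{(W,\rho)}$ under symplectic isomorphisms.
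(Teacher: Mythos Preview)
Your proposal is correct and matches the paper's approach exactly: the paper derives this corollary directly ``from the lemmas above and Corollary \ref{cor:symplecto}'', i.e.\ by specializing the conjugation lemma to the symplectic isomorphism $\mathcal{S}$ of Corollary \ref{cor:symplecto}. Your additional verification of items (i)--(iii) simply spells out the content of that lemma, which the paper states without proof.
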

 Since, as $\bbR[[\lambda]]$-modules, $\widehat{V}_\star$ and $V$ are identical, 
we can make sense of acting with $\alpha\in\mathcal{G}$
on $\widehat{V}_\star$. Thus, we can write $S_\mathcal{G}(\alpha) = \mathcal{S}^{-1}\circ\alpha\circ \mathcal{S}\circ
 \alpha^{-1}\circ \alpha =: \mathcal{S}^{-1}\circ \mathcal{S}_\alpha\circ \alpha$, for all $\alpha\in\mathcal{G}$.
In case the symplectic isomorphism $\mathcal{S}$ is invariant under the adjoint action of
$\alpha$, i.e.~$\mathcal{S}_\alpha=\alpha\circ\mathcal{S}\circ \alpha^{-1}=\mathcal{S}$,
 we simply obtain the standard transformation law 
$\mathcal{S}_\mathcal{G}(\alpha)=\alpha$ in the deformed field theory.  In the case of 
$\mathcal{S}_\alpha\neq\mathcal{S}$, the undeformed transformation $\alpha$ is implemented in a non-canonical
way on the deformed symplectic $\bbR[[\lambda]]$-module and also on the deformed algebra of field polynomials.
Thus, the corollary above shows that the deformed field theory enjoys the same amount of symmetries as the undeformed one, 
with the difference that some transformations (what is usually called the broken symmetries) 
are represented in a non-canonical way.
\begin{remark}
 By using explicit toy-models we show in Section \ref{sec:convergent} that
 Corollary \ref{cor:symplecticgroup} does not carry over to convergent deformations.
 This means that convergent deformations can break symmetries, which were not broken before in 
 formal deformation quantization.
\end{remark}

\subsection{States on the deformed $\ast$-algebra of field polynomials}
Let us first fix notation. A state on a unital $\ast$-algebra $\mathcal{A}$ over $\mathbb{C}[[\lambda]]$ 
is a $\mathbb{C}[[\lambda]]$-linear map $\Omega:\mathcal{A}\to\mathbb{C}[[\lambda]]$, such that
\begin{subequations}
\begin{flalign}
 \Omega(1)&=1~,\\
 \Omega(a^\ast\,a)&\geq 0~,~\forall a\in\mathcal{A}~.
\end{flalign}
\end{subequations}
The ordering on $\mathbb{R}[[\lambda]]$ is defined by
\begin{flalign}
 \gamma= \sum\limits_{n=n_0}^{\infty}\lambda^n \gamma_n>0\quad :\Longleftrightarrow\quad \gamma_{n_0}>0~.
\end{flalign}
Assume that there is a group $G$ acting on $\mathcal{A}$ as $\ast$-algebra automorphisms.
We say that a state $\Omega$ is symmetric under $G$, if
\begin{flalign}
\Omega\bigl(\pi_\alpha (a)\bigr) =\Omega(a)~,
\end{flalign}
for all $a\in\mathcal{A}$ and $\alpha\in G$.

We remind the reader of the following standard construction.
\begin{lem}
\label{lem:pullback}
 Let $\mathcal{A}_1$ and $\mathcal{A}_2$ be two unital $\ast$-algebras over $\mathbb{C}[[\lambda]]$ and let 
$\kappa:\mathcal{A}_1 \to \mathcal{A}_2$ be a $\ast$-algebra homomorphism. Then each state $\Omega_2$ on $\mathcal{A}_2$
induces a state $\Omega_1$ on $\mathcal{A}_1$ by defining
\begin{flalign}
 \Omega_1(a):= \Omega_2\bigl(\kappa (a)\bigr)~,
\end{flalign}
for all $a\in\mathcal{A}_1$.
\end{lem}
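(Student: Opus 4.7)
The plan is to verify directly that $\Omega_1 := \Omega_2 \circ \kappa$ satisfies the three defining conditions of a state on $\mathcal{A}_1$: $\mathbb{C}[[\lambda]]$-linearity, normalization, and positivity. Since $\kappa$ is a unital $\ast$-algebra homomorphism and $\Omega_2$ is a state on $\mathcal{A}_2$, each of these will follow essentially by composition, so no nontrivial obstacle arises.

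First I would observe that $\Omega_1$ is $\mathbb{C}[[\lambda]]$-linear because it is the composition of the $\mathbb{C}[[\lambda]]$-linear map $\kappa$ with the $\mathbb{C}[[\lambda]]$-linear map $\Omega_2$. Next, normalization $\Omega_1(1) = 1$ follows from $\kappa(1_{\mathcal{A}_1}) = 1_{\mathcal{A}_2}$ (since $\kappa$ is unital) together with $\Omega_2(1_{\mathcal{A}_2}) = 1$. For positivity, I would use that $\kappa$ is a $\ast$-homomorphism, so for every $a \in \mathcal{A}_1$,
\begin{flalign*}
\kappa(a^\ast\, a) = \kappa(a)^\ast\, \kappa(a)~,
\end{flalign*}
and hence $\Omega_1(a^\ast\, a) = \Omega_2\bigl(\kappa(a)^\ast\, \kappa(a)\bigr) \geq 0$ by positivity of $\Omega_2$ on $\mathcal{A}_2$, with the ordering on $\mathbb{R}[[\lambda]]$ taken as in the text preceding the lemma.

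The only point to be slightly careful about is that positivity is formulated over $\mathbb{R}[[\lambda]]$ rather than over $\mathbb{R}$, but this causes no trouble: the pullback of a non-negative element under the identity on $\mathbb{R}[[\lambda]]$ is trivially non-negative in the sense of the given ordering. Hence $\Omega_1$ is a state, which completes the proof.
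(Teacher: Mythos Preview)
Your proof is correct and follows essentially the same approach as the paper: both verify $\mathbb{C}[[\lambda]]$-linearity, normalization via $\kappa(1)=1$, and positivity via $\kappa(a^\ast a)=\kappa(a)^\ast\kappa(a)$. Your additional remark about the ordering on $\mathbb{R}[[\lambda]]$ is a harmless elaboration not present in the paper's more terse argument.
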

\begin{proof}
 $\Omega_1:\mathcal{A}_1\to\mathbb{C}[[\lambda]]$ is a $\mathbb{C}[[\lambda]]$-linear map.
We have $\Omega_1(1)=\Omega_2(\kappa(1))=\Omega_2(1)=1$ and
\begin{flalign}
\Omega_1(a^\ast\,a)=\Omega_2(\kappa(a^\ast\,a))=\Omega_2(\kappa(a)^\ast\,\kappa(a))\geq 0~,
\end{flalign}
for all $a\in\mathcal{A}_1$.
\end{proof}
The state $\Omega_1$ defined above is called the pull-back of $\Omega_2$ under the $\ast$-algebra homomorphism $\kappa$.
Note that in case of a $\ast$-algebra isomorphism $\kappa:\mathcal{A}_1\to\mathcal{A}_2$ 
there is a bijection between the states on $\mathcal{A}_1$ and the states on
$\mathcal{A}_2$.

For the case of the deformed and undeformed QFT this leads us to the following
\begin{cor}
There is a bijection between the states on $\mathcal{A}_{(V,\omega)}$ and the states on
$\mathcal{A}_{(\widehat{V}_\star,\widehat{\omega}_\star)}$. 
The $\mathcal{G}$-symmetric states on $\mathcal{A}_{(V,\omega)}$ are pulled-back 
to $\widehat{\mathcal{G}}_\star$-symmetric states on $\mathcal{A}_{(\widehat{V}_\star,\widehat{\omega}_\star)}$, and vice versa.
\end{cor}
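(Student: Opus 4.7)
The plan is to apply Lemma \ref{lem:pullback} with the $\ast$-algebra isomorphism $\mathfrak{S}:\mathcal{A}_{(\widehat{V}_\star,\widehat{\omega}_\star)}\to\mathcal{A}_{(V,\omega)}$ from Corollary \ref{cor:fieldpolyiso}, which immediately yields a pull-back map $\Omega\mapsto \widehat{\Omega}_\star:=\Omega\circ\mathfrak{S}$ sending states on $\mathcal{A}_{(V,\omega)}$ to states on $\mathcal{A}_{(\widehat{V}_\star,\widehat{\omega}_\star)}$. Since $\mathfrak{S}$ is an isomorphism, $\mathfrak{S}^{-1}$ is also a $\ast$-algebra homomorphism and induces the inverse assignment $\widehat{\Omega}_\star\mapsto\widehat{\Omega}_\star\circ\mathfrak{S}^{-1}$, which establishes the claimed bijection on states.

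For the symmetry part, the key intermediate step is an intertwining relation between the group actions on the two algebras. Using Corollary \ref{cor:symplecticgroup}, each $\hat\alpha\in\widehat{\mathcal{G}}_\star$ is of the form $\hat\alpha=\mathcal{S}^{-1}\circ\alpha\circ\mathcal{S}$ for a unique $\alpha\in\mathcal{G}$. I would then check on the generators of $\mathcal{A}_{(\widehat{V}_\star,\widehat{\omega}_\star)}$ that
\begin{flalign*}
\mathfrak{S}\circ\pi_{\hat\alpha} = \pi_\alpha\circ\mathfrak{S}~,
\end{flalign*}
which is a one-line calculation: $\mathfrak{S}(\pi_{\hat\alpha}(\Phi(\varphi)))=\Phi(\mathcal{S}\hat\alpha\varphi)=\Phi(\alpha\mathcal{S}\varphi)=\pi_\alpha(\mathfrak{S}(\Phi(\varphi)))$. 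Since both sides are $\ast$-algebra homomorphisms agreeing on generators, they agree on all of $\mathcal{A}_{(\widehat{V}_\star,\widehat{\omega}_\star)}$.

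With this intertwiner in hand, the preservation of symmetry is immediate: if $\Omega$ is $\mathcal{G}$-symmetric and $\widehat{\Omega}_\star=\Omega\circ\mathfrak{S}$, then for any $\hat\alpha\in\widehat{\mathcal{G}}_\star$ and any $a\in\mathcal{A}_{(\widehat{V}_\star,\widehat{\omega}_\star)}$,
\begin{flalign*}
\widehat{\Omega}_\star\bigl(\pi_{\hat\alpha}(a)\bigr)=\Omega\bigl(\mathfrak{S}(\pi_{\hat\alpha}(a))\bigr)=\Omega\bigl(\pi_\alpha(\mathfrak{S}(a))\bigr)=\Omega\bigl(\mathfrak{S}(a)\bigr)=\widehat{\Omega}_\star(a)~.
\end{flalign*}
The converse direction is symmetric: given $\widehat{\Omega}_\star$ that is $\widehat{\mathcal{G}}_\star$-symmetric, the analogous calculation with $\mathfrak{S}^{-1}$ in place of $\mathfrak{S}$ and $S_\mathcal{G}^{-1}$ in place of $S_\mathcal{G}$ shows that $\Omega:=\widehat{\Omega}_\star\circ\mathfrak{S}^{-1}$ is $\mathcal{G}$-symmetric.

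There is no real obstacle here, since all the nontrivial content—the algebra isomorphism, the group isomorphism, and the pull-back construction—is already packaged in Corollaries \ref{cor:fieldpolyiso}, \ref{cor:symplecticgroup} and Lemma \ref{lem:pullback}. The only thing that needs verification beyond quoting earlier results is the intertwiner $\mathfrak{S}\circ\pi_{\hat\alpha}=\pi_{S_\mathcal{G}^{-1}(\hat\alpha)}\circ\mathfrak{S}$, which reduces to a generator-level check exploiting exactly the conjugation formula defining $S_\mathcal{G}$.
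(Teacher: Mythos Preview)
Your proposal is correct and follows essentially the same approach as the paper's proof. The paper also reduces the bijection of states to the remark following Lemma~\ref{lem:pullback} together with Corollary~\ref{cor:fieldpolyiso}, and for the symmetry part it performs the same chain of equalities you write, using implicitly the intertwining relation $\mathfrak{S}\circ\pi_{\hat\alpha}=\pi_{\alpha}\circ\mathfrak{S}$ (written there as $\pi_{\hat\alpha}=\mathfrak{S}^{-1}\circ\pi_{\alpha}\circ\mathfrak{S}$) that you make explicit on generators.
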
 
\begin{proof}
 It remains to show that symmetric states are pulled-back to symmetric states.
 Let $\Omega$ be a $\mathcal{G}$-symmetric state on  $\mathcal{A}_{(V,\omega)}$ and
 let $\alpha\in\widehat{\mathcal{G}}_\star$ be arbitrary.
 By Corollary \ref{cor:symplecticgroup} we find a $\beta\in\mathcal{G}$, such that 
$\alpha=\mathcal{S}^{-1}\circ\beta\circ\mathcal{S}$. We obtain for all $a\in\mathcal{A}_{(\widehat{V}_\star,\widehat{\omega}_\star)}$
\begin{flalign}
 \Omega_\star\bigl(\pi_\alpha (a)\bigr)= \Omega\bigl(\mathfrak{S}(\pi_\alpha (a))\bigr)=\Omega\bigl((\mathfrak{S}\circ
\mathfrak{S}^{-1}\circ\pi_\beta\circ\mathfrak{S})(a)  \bigr)=\Omega\bigl(\pi_\beta(\mathfrak{S}(a))\bigr)
= \Omega_\star(a)~.
\end{flalign}
The vice versa is proven analogously.
\end{proof}


\section{\label{sec:convergent}Examples of convergent homothetic Killing deformations}
\subsection{\label{ex:frwqft}A spatially flat FRW toy-model}
In this section we apply the formalism developed in Section \ref{sec:formal} to a toy-model.
We use a special choice of the FRW spacetime of Example \ref{ex:frw}. Let $\mathcal{M}=(0,\infty)\times\mathbb{R}^3$, i.e.~$D=4$,
and let $t$ and $x^i,~i\in\lbrace1,2,3\rbrace,$ be global coordinates.
The metric field we consider is given by $g=-dt\otimes dt + t^2\,\delta_{ij}dx^i\otimes dx^j$.
Note that in our conventions the spatial coordinates $x^i$ are dimensionless. The reason for choosing
the scale factor $a(t)\propto t$ is that in this case a proper homothetic Killing vector field
is given by $X_2=t\partial_t$ and {\it all} Killing vector fields commute with $X_2$.
The most general real Killing vector field
is  $k_{(\xi,\eta)} :=\xi^i\partial_i + \eta^{k}\epsilon_{kij}x^i\partial_j$, where $\xi,\eta\in\mathbb{R}^3$.

\paragraph{The undeformed theory:}
We start by collecting useful formulae of the undeformed free, real, massless and curvature coupled scalar QFT on our particular
FRW spacetime. They will be used later to study the NC deformation. We frequently
use the Fourier transformation on the spatial hypersurfaces $\mathbb{R}^3$ defined by $t=\text{const}$.
We indicate this transformation by a tilde and use the conventions
\begin{flalign}
 \widetilde{\varphi}(t,k)= \int\limits_{\mathbb{R}^3}d^3x\,e^{ikx}\,\varphi(t,x)\quad ,\qquad \varphi(t,x)=
\int\limits_{\mathbb{R}^3}\frac{d^3k}{(2\pi)^3}\,e^{-ikx}\,\widetilde{\varphi}(t,k)~.
\end{flalign}
The wave operator $P=\square_g-\xi\,\mathfrak{R}$ in Fourier space is given by
\begin{flalign}
 \widetilde{P}\bigl(\widetilde{\varphi}\bigr)(t,k) = -\left(\partial_t^2+\frac{3}{t}\partial_t + \frac{k^2+6\xi}{t^2}\right)\widetilde{\varphi}(t,k)~.
\end{flalign}
The corresponding retarded and advanced Green's operators read
\begin{flalign}
\label{eqn:greenfrw}
 \widetilde{\Delta}_{\pm}\bigl(\widetilde{\varphi}\bigr)(t,k)=-\int\limits_{t_\pm}^t d\tau\tau^3\,\widetilde{\Delta}(t,\tau,k)\,
\widetilde{\varphi}(\tau,k)~,
\end{flalign}
where $t_+=0$, $t_-=\infty$ and
\begin{flalign}
\widetilde{\Delta}(t,\tau,k)= \frac{t^{\sqrt{1-k^2-6\xi}}\tau^{-\sqrt{1-k^2-6\xi}}-
t^{-\sqrt{1-k^2-6\xi}}\tau^{\sqrt{1-k^2-6\xi}}}{2t\tau\sqrt{1-k^2-6\xi}}~.
\end{flalign}
We obtain for the fundamental solution $\Delta=\Delta_+-\Delta_-$
\begin{flalign}
  \widetilde{\Delta}\bigl(\widetilde{\varphi}\bigr)(t,k)=-\int\limits_{0}^{\infty} d\tau\tau^3\,\widetilde{\Delta}(t,\tau,k)\,
\widetilde{\varphi}(\tau,k)~,
\end{flalign}
resulting in the pre-symplectic structure
\begin{flalign}
 \omega(\varphi,\psi) = -\int\limits_{0}^\infty dt t^3 \int\limits_{0}^\infty d\tau \tau^3\int\limits_{\mathbb{R}^3} 
\frac{d^3k}{(2\pi)^3}\,\widetilde{\varphi}(t,-k)\,\widetilde{\Delta}(t,\tau,k)\,\widetilde{\psi}(\tau,k)~.
\end{flalign}

We define the geometric action of $(R,a)\in SO(3)\ltimes\mathbb{R}^3$ on $C^\infty(\mathcal{M})$ by
\begin{flalign}
 \bigl(\alpha_{(R,a)} \varphi\bigr)(t,x) :=\varphi\bigl(t,R^{-1}(x-a)\bigr)~.
\end{flalign}
In Fourier space, these transformations are given by
\begin{flalign}
\label{eqn:geometricactionfrw}
 \bigl(\widetilde{\alpha}_{(R,a)}\widetilde{\varphi}\bigr)(t,k)=e^{ika}\,\widetilde{\varphi}(t,R^{-1}k)~.
\end{flalign}
We easily obtain that $SO(3)\ltimes\mathbb{R}^3\subseteq \mathcal{G}_{(V,\omega)}$ are symplectic automorphisms
of the symplectic vector space $(V,\omega)=
\bigl(C^\infty_0(\mathcal{M},\mathbb{R})/P[C^\infty_0(\mathcal{M},\mathbb{R})],\omega\bigr)$.

In this section we are working in a convergent framework and all symplectic vector spaces are vector spaces 
over $\mathbb{R}$. Thus, we can apply the powerful theory of Weyl algebras in order to quantize the
symplectic vector space $(V,\omega)$, i.e.~in order to define the QFT \cite{baer}. 
We briefly fix notation. A Weyl system $(A,W)$ of a symplectic vector space $(V,\omega)$ 
 consists of a unital $C^\ast$-algebra $A$ and a map $W:V\to A$ such 
that for all $\varphi,\psi\in V$ we have
\begin{subequations}
\begin{flalign}
 W(0)&=1~,\\
 W(\varphi)^\ast &= W(-\varphi)~,\\
 W(\varphi)\,W(\psi)&=e^{-i\omega(\varphi,\psi)/2}\, W(\varphi+\psi)~.
\end{flalign}
\end{subequations}
A Weyl system $(A,W)$ of a symplectic vector space $(V,\omega)$ is 
called a CCR-representation of $(V,\omega)$ if $A$ is generated 
as a $C^\ast$-algebra by the elements $W(\varphi),~\varphi\in V$. The algebra $A$
is called the Weyl algebra (or CCR algebra) and it is unique up to $\ast$-isomorphisms.
The group of symplectic automorphisms of $(V,\omega)$ can be represented on the Weyl algebra
by employing Corollary 4.2.11 of \cite{baer}, which states that given a symplectic linear map $S:V_1\to V_2$ 
between two symplectic vector spaces, there exists a unique injective $\ast$-morphism
$\mathfrak{S}:A_1\to A_2$ between the corresponding Weyl algebras such that $W_2(S\varphi)=\mathfrak{S}\bigl(W_1(\varphi)\bigr)$,
for all $\varphi\in V_1$.

The theory is fixed after we have specified an algebraic state $\Omega: A\to\mathbb{C}$.
This choice is in general highly nonunique. In our work we do not require an explicit choice
of state and we just assume that $\Omega$ is an $SO(3)\ltimes\mathbb{R}^3$-invariant state.
From the physical perspective, it is also natural to impose the Hadamard condition, regularity and quasifreeness.

\paragraph{The deformed theory with ${\mathbf X_1=\partial_1}$:}
We study a convergent deformation of our FRW model. We choose $X_1=\partial_1$, 
i.e.~a translation along the $x^1$-direction. 
The flow generated by $X_1=\partial_1$ is noncompact.
The condition $[X_1,X_2]=0$, which is required for our deformations, is satisfied. 
Our strategy is to make a convergent definition of the maps 
$S=\sqrt{\cos(3\lambda \partial_1)^{-1}}$ and $S^{-1}=\sqrt{\cos(3\lambda \partial_1)}$,
which enter the construction of the deformed QFT in Section \ref{sec:formal}. Using these
maps we construct the deformed QFT and investigate its properties. 

We first define a convenient space of functions $C^\infty_{0,\mathscr{S}}(\mathcal{M})\subset C^\infty(\mathcal{M})$,
where $\mathcal{M}=(0,\infty)\times \mathbb{R}^3$. 
A function $\varphi\in C^\infty(\mathcal{M})$ is in $C^\infty_{0,\mathscr{S}}(\mathcal{M})$, iff
$(1.)$ for all fixed $x\in\mathbb{R}^3$ $\varphi(t,x)\in C^\infty_0((0,\infty))$
and $(2.)$ for all fixed $t\in (0,\infty)$ $\varphi(t,x)\in \mathscr{S}(\mathbb{R}^3)$ is a Schwartz function.
We have $C^\infty_0(\mathcal{M})\subset C^\infty_{0,\mathscr{S}}(\mathcal{M})\subset C^\infty(\mathcal{M})$.
The spatial Fourier transformation
is an automorphism of $ C^\infty_{0,\mathscr{S}}(\mathcal{M})$, i.e.~let $\varphi(t,x)\in C^\infty_{0,\mathscr{S}}(\mathcal{M})$
 then $\widetilde{\varphi}(t,k)\in C^\infty_{0,\mathscr{S}}(\mathcal{M})$ and vice versa.

Using the spatial Fourier transformation we define the map $S:C^\infty_{0,\mathscr{S}}(\mathcal{M})\to 
 C^\infty_{0,\mathscr{S}}(\mathcal{M})$ by
\begin{flalign}
\label{eqn:convsymp}
 \bigl(\widetilde{S}\widetilde{\varphi}\bigr)(t,k) := \sqrt{\cosh(3\lambda k_1)^{-1}}\,\widetilde{\varphi}(t,k)~.
\end{flalign}
The hyperbolic cosine in (\ref{eqn:convsymp}) is due $\partial_i\to -ik_i$ in Fourier space.  
This map is injective. The inverse map $S^{-1}:S\bigl(C^\infty_{0,\mathscr{S}}(\mathcal{M})\bigr)\to 
C^\infty_{0,\mathscr{S}}(\mathcal{M})$ is given by
\begin{flalign}
 \bigl(\widetilde{S}^{-1}\widetilde{\varphi}\bigr)(t,k) := \sqrt{\cosh(3\lambda k_1)}\,\widetilde{\varphi}(t,k)~.
\end{flalign}
Note that $S\bigl(C^\infty_{0,\mathscr{S}}(\mathcal{M})\bigr)\subset C^\infty_{0,\mathscr{S}}(\mathcal{M})$, since
$S\bigl(C^\infty_{0,\mathscr{S}}(\mathcal{M})\bigr)$ includes only functions with Fourier spectra
decreasing faster than $e^{-3\lambda \vert k_1\vert /2}$ for large $\vert k_1\vert$.

We can now construct the deformed QFT. We consider only the hatted quantities of
Section \ref{sec:formal}, 
i.e.~$\widehat{P}_\star$, $\widehat{\Delta}_{\star\pm}$, $\widehat{\omega}_\star$, etc.~and 
drop all hats to simplify notation. We define the deformed Green's operators by
\begin{flalign}
 \widetilde{\Delta}_{\star\pm}\bigl(\widetilde{\varphi}\bigr)(t,k):=\widetilde{\Delta}_{\pm}\bigl(\widetilde{S}^2\widetilde{\varphi}\bigr)(t,k)
=-\int\limits_{t_\pm}^t d\tau\tau^3\,\frac{\widetilde{\Delta}(t,\tau,k)}{\cosh(3\lambda k_1)}\,\widetilde{\varphi}(\tau,k)~.
\end{flalign}
This results in the deformed fundamental solution
\begin{flalign}
\label{eqn:frwdefsympl}
 \widetilde{\Delta}_{\star}\bigl(\widetilde{\varphi}\bigr)(t,k)=-\int\limits_{0}^\infty d\tau\tau^3\,
\frac{\widetilde{\Delta}(t,\tau,k)}{\cosh(3\lambda k_1)}\,\widetilde{\varphi}(\tau,k)~,
\end{flalign}
and the deformed pre-symplectic structure on $C^\infty_0(\mathcal{M},\mathbb{R})$
\begin{flalign}
 \omega_\star(\varphi,\psi) = -\int\limits_{0}^\infty dt t^3 \int\limits_{0}^\infty d\tau \tau^3\int\limits_{\mathbb{R}^3} 
\frac{d^3k}{(2\pi)^3}\,\widetilde{\varphi}(t,-k)\,\frac{\widetilde{\Delta}(t,\tau,k)}{\cosh(3\lambda k_1)}\,
\widetilde{\psi}(\tau,k)~.
\end{flalign}
The deformed fundamental solution satisfies $\text{Ker}(\Delta_\star) = \text{Ker}(\Delta)$.
To prove this, let $\varphi\in C^\infty_0(\mathcal{M},\mathbb{R})$. Using (\ref{eqn:frwdefsympl}) we find
\begin{flalign}
\widetilde{\Delta}_\star\bigl(\widetilde{\varphi}\bigr)(t,k)=\cosh(3\lambda k_1)^{-1}\,\widetilde{\Delta}\bigl(\widetilde{\varphi}\bigr)(t,k)~,
\end{flalign}
and the proof follows from the positivity of $\cosh(3\lambda k_1)^{-1}$.
From general considerations \cite{baer} we know that $\text{Ker}(\Delta)=P[C^\infty_0(\mathcal{M},\mathbb{R})]$.
Thus, we can define the deformed symplectic vector space as $(V_\star,\omega_\star):= 
\bigl(C^\infty_0(\mathcal{M},\mathbb{R})/P[C^\infty_0(\mathcal{M},\mathbb{R})], \omega_\star\bigr)$.
Different to Section \ref{sec:formal} this is now a vector space over $\mathbb{R}$ and not a module over 
the ring $\mathbb{R}[[\lambda]]$.

The construction of the deformed QFT in terms of a CCR-representation of $(V_\star,\omega_\star)$
can be performed analogously to the undeformed QFT, since $(V_\star,\omega_\star)$ is symplectic 
vector space over $\mathbb{R}$.
This results in a unique (up to $\ast$-isomorphisms) deformed Weyl algebra $A_\star$
describing the deformed QFT.

\paragraph{Physical features of the deformed theory:}
Let us investigate some physical features of the deformed field theory. In the following we assume that $\lambda>0$.
We study in more detail the map $S^2$ acting on $C^\infty_0(\mathcal{M})$. Note that in 
position space this map is given by the following convolution
\begin{flalign}
\label{eqn:convolution}
 \bigl(S^2\varphi\bigr)(t,x) = \int\limits_\mathbb{R}dy^1\,\frac{1}{6\lambda\,\cosh\bigl(\pi (x^1-y^1)/6\lambda\bigr)}\,
\varphi(t,y^1,x^2,x^3)~,
\end{flalign}
for all $\varphi\in C^\infty_0(\mathcal{M})$. It is easy to check that the image $S^2[C^\infty_0(\mathcal{M})]\not\subseteq 
C^\infty_0(\mathcal{M})$. For this assume that $\varphi\in C^\infty_0(\mathcal{M})$ is a positive semidefinite function localized
in some compact region $K\subset\mathcal{M}$, e.g.~a bump function. Since the convolution kernel is of noncompact support
and strictly positive, the resulting function $S^2\varphi$ is of noncompact support in the $x^1$-direction.
However, $S^2\varphi$ will be of rapid decrease, since the convolution kernel is a Schwartz function.

Physically, this means that causality is lost. We immediately obtain that the  relation
$\Delta_{\star\pm}(\varphi)\subseteq J^\pm\bigl(\text{supp}(\varphi)\bigr)$, for all $\varphi\in C^\infty_0(\mathcal{M})$, 
is violated. 
Thus, external sources couple in a nonlocal way to our deformed field theory, what is a feature not
present in the commutative counterpart. Since $\Delta_{\star\pm}$ depends on the value of $\lambda$ through (\ref{eqn:convolution}),
we can determine $\lambda$ (in principle) by measuring the response of the field to external excitations.

Consider now the deformed symplectic structure in position space
\begin{flalign}
\label{eqn:nonlocalsymp}
 \omega_\star(\varphi,\psi)=\int\limits_\mathcal{M} \varphi\,\Delta(S^2\psi)\,\vol~.
\end{flalign}
Due to the appearance of the nonlocal map $S^2$ (\ref{eqn:convolution}) there
are functions $\varphi,\psi$ with causally disconnected support satisfying $\omega_\star(\varphi,\psi)\neq 0$. 
For our choice of deformation ($X_1=\partial_1$ and $X_2=t\partial_t$)  this nonlocality 
only affects the $x^1$-direction, but its range is infinite. We still obtain that $\omega_\star(\varphi,\psi)=0$ if
arbitrary translations of  $\text{supp}(\psi)$ along the $x^1$-direction and $\text{supp}(\varphi)$ are
causally disconnected.
In the QFT described by $A_\star$ this nonlocal behavior leads to a noncommutativity between observables located in
 spacelike separated regions in spacetime.

\paragraph{$\ast$-isomorphism between the deformed and a nonstandard undeformed QFT:}
We have shown in the previous paragraph that 
$S^2[C^\infty_0(\mathcal{M},\mathbb{R})]\not\subseteq C^\infty_0(\mathcal{M},\mathbb{R})$.
The map $S$ defined in (\ref{eqn:convsymp}) can also be rewritten in terms of a convolution in position
space, similar to $S^2$, but the convolution kernel is more complicated. Its explicit form is not of importance to us.

From $S^2[C^\infty_0(\mathcal{M},\mathbb{R})]\not\subseteq C^\infty_0(\mathcal{M},\mathbb{R})$ we can infer
that $S[C^\infty_0(\mathcal{M},\mathbb{R})]\not\subseteq C^\infty_0(\mathcal{M},\mathbb{R})$. To prove this, assume
that $S[C^\infty_0(\mathcal{M},\mathbb{R})]\subseteq C^\infty_0(\mathcal{M},\mathbb{R})$ then we would find
$S^2[C^\infty_0(\mathcal{M},\mathbb{R})]\subseteq C^\infty_0(\mathcal{M},\mathbb{R})$ by applying $S$ twice, 
which contradicts the observation above.

The bijective map $S:C^\infty_0(\mathcal{M},\mathbb{R})\to C^\infty_\text{img}(\mathcal{M},\mathbb{R})
\subset C^\infty_{0,\mathscr{S}}(\mathcal{M},\mathbb{R})$ induces a symplectic isomorphism between
$(V_\star,\omega_\star)$ and $(V_{\text{img}},\omega):=\bigl(C^\infty_\text{img}(\mathcal{M},\mathbb{R})
/S\bigl[P[C_0^\infty(\mathcal{M},\mathbb{R})]\bigr], \omega\bigr)$, i.e.~the deformed field theory
can be related to a nonstandard undeformed one. Using again Corollary 4.2.11 of \cite{baer},
this map induces a unique $\ast$-isomorphism between the Weyl algebras $A_\star$ and 
$A_\text{img}$.

Let us consider the symmetries of the deformed QFT. As we have seen above, the undeformed symplectic
vector space $(V,\omega)$ contains $SO(3)\ltimes \mathbb{R}^3$ in the group of symplectic automorphisms.
The representation is given by the geometric action (\ref{eqn:geometricactionfrw}).
However, the space $C^\infty_{\text{img}}(\mathcal{M},\mathbb{R})\subset C^\infty_{0,\mathscr{S}}(\mathcal{M},\mathbb{R})$,
 which serves as a pre-symplectic vector space for $(V_\text{img},\omega)$, is not invariant under the action of 
$SO(3)\ltimes \mathbb{R}^3$. The preserved subgroup is $SO(2)\ltimes \mathbb{R}^3$, where 
the $SO(2)$-rotation preserves the $x^1$-axis. Thus, the deformed QFT $A_\star$ is $\ast$-isomorphic
 to a nonsymmetric undeformed QFT $A_\text{img}$. This, in particular, shows that
Corollary \ref{cor:symplecticgroup} is restricted to formal deformation quantization.

\paragraph{Physical interpretation:}
As shown above, we do not have a $\ast$-isomorphism between the deformed QFT $A_\star$ and the {\it standard} undeformed 
QFT $A$ for convergent deformations. This fact has a very natural physical interpretation, which we will 
explain now.

Consider the extended symplectic vector space
$(V_\text{ext},\omega):= \bigl(C^\infty_{0,\mathscr{S}}(\mathcal{M},\mathbb{R})/\text{Ker}(\Delta),\omega\bigr)$.
This vector space carries a representation of $SO(3)\ltimes\mathbb{R}^3$ via the geometric
action (\ref{eqn:geometricactionfrw}). 
A CCR-representation of $(V_\text{ext},\omega)$ yields the extended Weyl algebra $A_\text{ext}$.
$A_\text{ext}$ carries a representation of the group $SO(3)\ltimes\mathbb{R}^3$. 
The algebra $A_\text{ext}$ is an extension of the usual Weyl algebra $A$, where also certain delocalized observables
are allowed. This already shows that $A_\text{ext}$ is more suitable to study the connection between deformed and undeformed
QFTs.

Analogously to the construction above, we define the deformed extended symplectic vector space
 $(V_{\text{ext}\star},\omega_\star):=(V_\text{ext},\omega_\star)$ and 
a CCR-representation of $(V_{\text{ext}\star},\omega_\star)$ yields the deformed extended Weyl algebra $A_{\text{ext}\star}$.

The bijective map $S:C^\infty_{0,\mathscr{S}}(\mathcal{M},\mathbb{R})\to C^\infty_{\text{img}^\prime}(\mathcal{M},\mathbb{R})
\subset C^\infty_{0,\mathscr{S}}(\mathcal{M},\mathbb{R})$ induces a symplectic isomorphism between
$(V_{\text{ext}\star},\omega_\star)$ and $(V_{\text{img}^\prime},\omega)$. Furthermore,
$S$ induces a symplectic embedding $(V_{\text{ext}\star},\omega_\star)\rightarrow (V_\text{ext},\omega)$.
Using Corollary 4.2.11 of \cite{baer}
we can induce a unique $\ast$-isomorphism between the Weyl algebras $A_{\text{ext}\star}$ and 
$A_{\text{img}^\prime}$, as well as a unique injective, but not surjective, $\ast$-morphism from $A_{\text{ext}\star}$ 
to $A_\text{ext}$. We thus have $A_{\text{ext}\star}\simeq A_{\text{img}^\prime}\subset A_{\text{ext}}$.

Physically, this means that due to the NC deformation the QFT looses observables. 
Since $C^\infty_{\text{img}^\prime}(\mathcal{M},\mathbb{R})$ depends on the value of 
$\lambda$, the observables we loose also depend on the value of $\lambda$.
Note that all functions in $C^\infty_{\text{img}^\prime}(\mathcal{M},\mathbb{R})$ have Fourier spectra which decrease
 faster than $e^{-3\lambda \vert k_1\vert/2}$ for large $\vert k_1\vert$. Increasing $\lambda$ leads 
to a sharper localization in momentum space and therefore 
a weaker localization in position space. Thus, the deformed QFT looses those observables
that are strongly localized in position space, which is physically very natural.

\paragraph{Induction of states:}
The injective $\ast$-morphism $\mathfrak{S}:A_\star\to A_{\text{ext}}$, 
or its extension $\mathfrak{S}_\text{ext}:A_{\text{ext}\star}\to A_{\text{ext}}$,
is useful for inducing states via the pull-back, see Lemma \ref{lem:pullback}.
Assume that we have an $SO(3)\ltimes\mathbb{R}^3$-invariant state $\Omega_{\text{ext}}:A_\text{ext}\to \mathbb{C}$.
Defining $\Omega_\star:= \Omega_\text{ext}\circ\mathfrak{S}:A_\star\to\mathbb{C}$ induces a state
on $A_\star$, which is invariant under the unbroken symmetry group $SO(2)\ltimes\mathbb{R}^3$.
The same holds true for $\Omega_{\text{ext}\star}:= \Omega_\text{ext}\circ\mathfrak{S}_\text{ext}:A_{\text{ext}\star}\to\mathbb{C}$.

\paragraph{NC effects on the cosmological power spectrum:}
In order to show that our models lead to nontrivial physical effects we briefly discuss the
cosmological power spectrum. We shall work with the extended Weyl algebras $A_{\text{ext}\star}$ and $A_{\text{ext}}$.
Let $\Omega:A_{\text{ext}}\to \mathbb{C}$ be a regular, quasifree and translation invariant (i.e.~$\mathbb{R}^3$-invariant)
state. We denote by $\bigl(\mathcal{H},\pi,\vert 0\rangle\bigr)$ the GNS-representation of $(A_{\text{ext}},\Omega)$, 
where $\mathcal{H}$ is a Hilbert space with scalar product $\langle\cdot\vert\cdot\rangle$,
$\pi:A_{\text{ext}}\to \mathcal{B}(\mathcal{H})$ is a representation in terms of bounded operators and 
$\vert 0\rangle\in\mathcal{H}$ is a cyclic vector of norm $1$.
Employing the $\ast$-morphism $\mathfrak{S}_\text{ext}:A_{\text{ext}\star}\to A_{\text{ext}}$ we obtain
a representation $\pi_\star=\pi\circ\mathfrak{S}_\text{ext}$ of $A_{\text{ext}\star}$ on $\mathcal{H}$. The vector $\vert 0\rangle$ 
might not be cyclic with respect to this representation and we define the Hilbert subspace $\mathcal{H}_\star = 
\overline{\pi_\star[A_{\text{ext}\star}]\vert 0\rangle}$. By the GNS-Theorem the cyclic representation 
$\bigl(\mathcal{H}_\star,\pi_\star,\vert 0 \rangle\bigr)$ is unitary equivalent to the GNS-representation 
of $(A_{\text{ext}\star},\Omega_\star)$. 

Using regularity of the state $\Omega$ we can define the 
(unbounded) hermitian field operators $\Phi(\varphi)\in\mathcal{L}(\mathcal{H})$, $\varphi\in V_\text{ext}$, 
as the generators of the Weyl operators $\pi\bigl(W(\varphi)\bigr)\in\mathcal{B}(\mathcal{H})$. They are related to the 
deformed field operators $\Phi_\star$ by $\Phi_\star(\varphi)= \Phi(S\varphi)\in\mathcal{L}(\mathcal{H})$,
which are the generators of $\pi_\star\bigl(W_\star(\varphi)\bigr) = \pi\bigl(W(S\varphi)\bigr)$.
Thus, $n$-point functions of the deformed field operators can be related to $n$-point functions of the
undeformed ones. Due to the quasifree assumption on the state $\Omega$ only the $2$-point function
is of interest and all higher $n$-point functions can be derived from it, in the undeformed and also deformed case.
We assume that $\vert 0 \rangle$ lies in the domain of $\Phi(\varphi)$, for all $\varphi\in V_\text{ext}$, and that
$\Phi(\psi)\vert 0\rangle$ lies in the domain of $\Phi(\varphi)$, for all $\varphi,\psi\in V_\text{ext}$.
 We define the deformed $2$-point function
\begin{flalign}
 \Omega_{\star2}(\varphi,\psi):= \langle 0\vert \Phi_\star(\varphi)\Phi_\star(\psi)\vert0\rangle=
\langle 0\vert \Phi(S\varphi)\Phi(S\psi)\vert0\rangle=\Omega_2(S\varphi,S\psi)~,
\end{flalign}
for all $\varphi,\psi\in V_\text{ext}$, where $\Omega_2$ is the undeformed $2$-point function.

In order to calculate the power spectrum we require the kernel of $\Omega_2$ ($\Omega_{\star2}$) in Fourier space.
Making use of the translation invariance of the state $\Omega$ we can define
\begin{flalign}
 \Omega_2(\varphi,\psi) = \int\limits_0^\infty dt t^3\int\limits_0^\infty d\tau \tau^3 \int\limits_{\mathbb{R}^3}
\frac{d^3k}{(2\pi)^3}~
\widetilde{\Omega}_2(t,\tau,k)\,\widetilde{\varphi}(t,k)\,\widetilde{\psi}(\tau,-k)~.
\end{flalign}
It follows that
\begin{flalign}
\label{eqn:def2point}
\Omega_{\star2}(\varphi,\psi) = 
 \Omega_{2}(S\varphi,S\psi) =
\int\limits_0^\infty dt t^3\int\limits_0^\infty d\tau \tau^3 \int\limits_{\mathbb{R}^3}
\frac{d^3k}{(2\pi)^3}~
\frac{\widetilde{\Omega}_2(t,\tau,k)}{\cosh(3\lambda k_1)}\,\widetilde{\varphi}(t,k)\,\widetilde{\psi}(\tau,-k)~.
\end{flalign}

The undeformed power spectrum is then given by
\begin{flalign}
 \mathcal{P}(t,k) := \widetilde{\Omega}_2(t,t,k)~,
\end{flalign}
and the deformed power spectrum reads
\begin{flalign}
\label{eqn:powerspectrum} 
\mathcal{P}_\star(t,k) = \frac{\mathcal{P}(t,k)}{\cosh(3\lambda k_1)}~.
\end{flalign}
The physical feature of $\mathcal{P}_\star$ compared to $\mathcal{P}$ is an exponential loss of power for large $\vert k_1\vert$.

\subsection{\label{ex:2dmodel}A spatially compact FRW toy-model}
The flow generated by the vector field $X_1=\partial_1$ in the model above is noncompact. We now investigate
if deformations along vector fields $X_1$ generating compact flows differ from the noncompact case.
A possible choice of toy-model would be the model of Section \ref{ex:frwqft} with
 $X_1=x^2\partial_3-x^3\partial_2$, i.e.~a rotation around the $x^1$-axis, and $X_2=t\partial_t$.
 
However, there is an even simpler model which we can use for our studies. Consider the manifold 
$\mathcal{M}= (0,\infty)\times S_1$, where $S_1$ is the one-dimensional circle, 
equipped with the metric $g=-dt\otimes dt+ t^2 d\phi\otimes d\phi$.
Here $t\in(0,\infty)$ denotes time and $\phi\in[0,2\pi)$ is the angle.  A proper homothetic Killing vector field is
$X_2=t\partial_t$ and $X_1=2\partial_\phi$ is a Killing vector field.\footnote{
The factor $2$ in the definition of $X_1$ is a convenient normalization.}  
The necessary condition $[X_1,X_2]=0$ is satisfied.
The calculation of the undeformed wave operator $P=\square_g-\xi\,\mathfrak{R}$ and the corresponding 
retarded and advanced Green's operators $\Delta_{\pm}$ is standard. We do not need to present the explicit results here.

We now investigate in detail the convergent version of $S=\sqrt{\cos(3\lambda\partial_\phi)^{-1}}$ and
$S^{-1}=\sqrt{\cos(3\lambda\partial_\phi)}$. 
For this we make use of the Fourier transformation on the spatial hypersurfaces $S_1$.
Since $S_1$ is compact, the momenta $n\in\mathbb{Z}$ are discrete. 
We define in Fourier space
\begin{subequations}
\begin{flalign}
\label{eqn:frw2S}\bigl(\widetilde{S}\widetilde{\varphi}\bigr)(t,n)&:= \sqrt{\cosh(3\lambda n)^{-1}}\,\widetilde{\varphi}(t,n)~,\\
\bigl(\widetilde{S}^{-1}\widetilde{\varphi}\bigr)(t,n)&:= \sqrt{\cosh(3\lambda n)}\,\widetilde{\varphi}(t,n)~.
\end{flalign}
\end{subequations}
It remains to discuss the domains of the maps $S$ and $S^{-1}$. Note that a function $\varphi$ on the circle $S_1$
is smooth if and only if its Fourier spectrum is of rapid decrease.
From this and (\ref{eqn:frw2S}) we infer $S: C^\infty_0(\mathcal{M})\to  C^\infty_0(\mathcal{M})$. 
However, $S^{-1}$ can not be defined on all of $C^\infty_0(\mathcal{M})$, provided we restrict the image of 
$S^{-1}$ to smooth functions. 
Since $S$ is injective, we find the isomorphism 
$S:C^\infty_0(\mathcal{M})\to C^\infty_\text{img}(\mathcal{M})\subset C^\infty_0(\mathcal{M})$.

Analogously to the situation where $X_1$ generates a noncompact flow we find that the symplectic isomorphism maps between
the deformed symplectic vector space $(V_\star,\omega_\star)$ and a nonstandard undeformed symplectic vector space
$(V_{\text{img}},\omega)$. In case of models with additional isometries $(V_{\text{img}},\omega)$ carries
only a representation of the unbroken subgroup. 
Furthermore, the map $S$ embeds $(V_\star,\omega_\star)$ into the undeformed  symplectic vector space $(V,\omega)$.
This is also analogous to the situation before, with the difference that we do not have to extend the 
symplectic vector space. All linear symplectic maps induce unique injective $\ast$-morphisms between
the corresponding Weyl algebras.

We find again that the deformed QFT looses observables, depending on the
value  of $\lambda$.
Since the functions in $C^\infty_{\text{img}}(\mathcal{M})$ have Fourier spectra which decrease
faster than $e^{-3\lambda \vert n\vert/2}$ for large $\vert n\vert$ we loose those observables
that are strongly localized in position space.

\subsection{On the physical inequivalence of the deformed and undeformed QFT}
In this subsection we collect arguments that our deformed QFT is physically inequivalent to 
standard commutative QFTs. 
This discussion is very important, since it has been shown \cite{Pohlmeyer}
that it is not possible in the framework of Wightman QFTs on
undeformed Minkowski spacetime to construct a nonlocal theory with
nonlocalities in the commutator function that fall off faster than
exponentially\footnote{
We thank the anonymous referee for bringing this to our attention.}.  
Indeed, Wightman QFTs with a faster than
exponentially vanishing nonlocality in the commutator function of
two fundamental fields are equivalent to local QFTs.  Therefore
we have to ensure that a similar result does not apply to our deformed models.

The first sign for an inequivalence of our deformed QFTs to standard commutative QFTs comes
from the nonlocal behavior of the deformed Green's operators $\Delta_{\star\pm}$ discussed above.
Coupling our deformed QFTs to external sources (or introducing perturbative interactions) then leads to 
nonlocal effects which in principle can be used to measure the deformation parameter $\lambda$.

The second sign comes from the form of the deformed power spectrum (\ref{eqn:powerspectrum}).
Commutative QFTs on FRW spacetimes typically have a power spectrum, which goes as 
$\mathcal{P}(t,k)\propto \vert k\vert^{n_s-4}$ for large $\vert k\vert$, where $n_s$ is
the spectral index. Our deformed power spectrum (\ref{eqn:powerspectrum}) shows, additionally to this power-law behavior,
an exponential drop-off for large $\vert k_1\vert$. Such a drop-off, and with this also the value of $\lambda$,
can in principle be measured.

The third sign comes from the nonlocality of the symplectic structure (\ref{eqn:nonlocalsymp}).
We consider analogously to \cite{Pohlmeyer} the correlation function
\begin{flalign}
\label{eqn:4pointcom} \langle 0\vert [\Phi_\star(\varphi_1),\Phi_\star(\varphi_2)]\Phi_\star(\varphi_3)\Phi_\star(\varphi_4)\vert 0\rangle~,
\end{flalign}
which, in our case, reduces due to the canonical commutation relations to
\begin{flalign}
 \nonumber \langle 0\vert [\Phi_\star(\varphi_1),\Phi_\star(\varphi_2)]\Phi_\star(\varphi_3)\Phi_\star(\varphi_4)\vert 0\rangle
&=i\,\omega_\star(\varphi_1,\varphi_2)\,\langle 0\vert \Phi_\star(\varphi_3)\Phi_\star(\varphi_4)\vert 0\rangle\\
&=
i\,\omega_\star(\varphi_1,\varphi_2)\,\Omega_{\star2}(\varphi_3,\varphi_4) ~.
\end{flalign}
Since, as explained above, the commutator function can be nonzero for functions $\varphi_1,\varphi_2$ with 
 spacelike separated support, (\ref{eqn:4pointcom}) is nonzero for these $\varphi_1,\varphi_2$ and for all
 $\varphi_3,\varphi_4$ such that $\Omega_{\star2}(\varphi_3,\varphi_4)\neq 0$.
Thus, our QFT is nonlocal.


\section{\label{sec:conc}Conclusions and outlook}
In this work we have studied a scalar QFT on certain NC self-similar symmetric spacetimes.
Our deformations are given by abelian Drinfel'd twists constructed from a Killing and a homothetic Killing vector field.
The deformed equation of motion and Green's operators have a particularly simple form, 
where the deformation factorizes. We have constructed the deformed  QFT in terms of a deformed 
$\ast$-algebra of field polynomials and have derived a $\ast$-algebra isomorphism connecting it to the
formal power series extension of the $\ast$-algebra of field polynomials of the corresponding undeformed QFT.
We have shown that the situation changes when we implement convergent deformations.
We have constructed the deformed Weyl algebra for  toy-models and have shown that it is $\ast$-isomorphic
to a reduced undeformed Weyl algebra, where strongly localized observables are excluded.

For future work there are two points of particular interest: Firstly, 
it would be interesting to study more general twists $\mathcal{F}$
constructed by Killing and homothetic Killing vector fields.
This would also result in a factorized deformed equation of motion operator $\widehat{P}_\star=Q_\mathcal{F}\circ (\square_g-\xi\,\mathfrak{R}) $, 
but now with $Q_\mathcal{F}$ more general than the cosine in (\ref{eqn:eomoperator}).
For models with spatial translation invariance we have the Killing vector fields $\partial_i$ and spatial Laplacian 
$\triangle = \delta^{ij}\partial_i\partial_j$.
Finding twists such that the operators $Q_\mathcal{F}$ only depend on $\triangle$ is of particular interest, since
this would lead to an isotropic modification of the QFT  at short length scales.
Secondly, the $D=2$ model of Section \ref{ex:2dmodel}, or the same model on $\mathcal{M}=(0,\infty)\times \mathbb{R}$,
is a very interesting toy-model to include interaction terms. The reason is that the deformed field operators represented 
in momentum space come with the factor $\sqrt{\cosh(3\lambda k)^{-1}}$, leading to an exponentially strong suppression
in the ultraviolet (UV). Thus, we expect that also the UV-properties of the interacting theory are strongly improved.
For the implementation of interactions the Yang-Feldman approach \cite{YFNC} may prove useful.


\section*{Acknowledgements}
I want to thank Claudio Dappiaggi, Eric Morfa-Morales, Thorsten Ohl, Christoph Uhlemann and Stefan Waldmann 
for discussions and comments on this work.
This research is supported by Deutsche Forschungsgemeinschaft through the Research 
Training Group GRK\,1147 \textit{Theoretical Astrophysics and Particle Physics}.

\appendix


\section{\label{app:twisted}On the twist deformation of the algebra of field polynomials}
In this appendix we study the twist approach to QFT, which is typically used for the Moyal-Weyl deformation
of a Minkowski QFT \cite{Zahn:2006wt,Balachandran:2007vx,Aschieri:2007sq}. We show that the twist deformation of the algebra
of field polynomials along homothetic Killing vector fields is possible if and only if the vector fields are Killing.
Note that in the approach \cite{Ohl:2009qe} to NC QFT no such restriction exist.

Let $\mathcal{A}_{(V,\omega)}$ be the formal power series extension of the $\ast$-algebra of field polynomials 
(see Definition \ref{def:fieldpoly}) of the commutative QFT. The basic idea of the twist approach to NC QFT is to replace 
the usual algebra product by a $\star$-product
\begin{flalign}
 a\star b=(\bar f^\alpha\triangleright a)\,(\bar f_\alpha \triangleright b)~,
\end{flalign}
for all $a,b\in\mathcal{A}_{(V,\omega)}$. We restrict ourselves to twists 
$\mathcal{F}^{-1}=\bar f^\alpha\otimes\bar f_\alpha$ generated by homothetic 
Killing vector fields $\mathfrak{H}$.
We assume the action $\triangleright$ of the twist on 
$\mathcal{A}_{(V,\omega)}$ to be the natural (geometric) action in order to interpret the deformation 
as a spacetime deformation. The geometric action of the Lie algebra $\mathfrak{H}$ is defined on the 
generators of $\mathcal{A}_{(V,\omega)}$ by 
\begin{subequations}
\begin{flalign}
 &v\triangleright 1:=0~,\\
 \label{eqn:geometricaction}&v\triangleright\Phi\bigl([\varphi]\bigr):= \Phi\bigl([\mathcal{L}_v(\varphi)]\bigr)~, 
\end{flalign}
\end{subequations}
for all $v\in\mathfrak{H}$ and $[\varphi]\in V$. The action is extended to $\mathcal{A}_{(V,\omega)}$ by 
$\mathbb{C}[[\lambda]]$-linearity and the Leibniz rule $v\triangleright (a\,b)=(v\triangleright a)\,b + a\,(v\triangleright b)$,
for all $a,b\in\mathcal{A}_{(V,\omega)}$ and $v\in\mathfrak{H}$.

It has to be checked if (\ref{eqn:geometricaction}) is well-defined. For this let
$[\varphi]=[\varphi^\prime]$, i.e.~$\varphi^\prime = \varphi + P(\psi)$, where $P=\square_g-\xi\,\mathfrak{R}$ is the equation
of motion operator and $\psi\in C^\infty_0(\mathcal{M},\mathbb{R})[[\lambda]]$. We find
\begin{flalign}
 \mathcal{L}_v(\varphi^\prime)= \mathcal{L}_v(\varphi)+\mathcal{L}_{v}\bigl(P(\psi)\bigr) = \mathcal{L}_v(\varphi)
 +P\bigl(\mathcal{L}_v(\psi) - c_v \psi\bigr)~,
\end{flalign}
where we have used that the scaling of the d'Alembert operator is $[\mathcal{L}_v,\square_g]=-c_v\,\square_g$.
Thus, the action is well-defined for all $v\in\mathfrak{H}$.

Next, we have to check if the action of $\mathfrak{H}$ is consistent with the commutation relations in $\mathcal{A}_{(V,\omega)}$.
We obtain the consistency condition 
(omitting the brackets $[\,\cdot\,]$ denoting equivalence classes)
\begin{multline}
\label{eqn:consistencycom}
 0=v\triangleright\bigl( i\, \omega(\varphi,\psi)\, 1\bigr)=v\triangleright[\Phi(\varphi),\Phi(\psi)]=
 [v\triangleright\Phi(\varphi),\Phi(\psi)]+[\Phi(\varphi),v\triangleright\Phi(\psi)]\\
 =[\Phi(\mathcal{L}_v(\varphi)),\Phi(\psi)]+[\Phi(\varphi),\Phi(\mathcal{L}_v(\psi))]=i\bigl(
\omega(\mathcal{L}_v(\varphi),\psi)+\omega(\varphi,\mathcal{L}_v(\psi))\bigr)\,1~,
\end{multline}
for all $\varphi,\psi\in V$ and $v\in\mathfrak{H}$.
Using the explicit form of $\omega$ we find
\begin{multline}
\label{eqn:leftright}
 \omega(\mathcal{L}_v(\varphi),\psi)=\int\limits_\mathcal{M}\mathcal{L}_v(\varphi)\,\Delta(\psi)\,\vol\stackrel{\text{PI}}{=}-
\int\limits_\mathcal{M}\varphi\,\mathcal{L}_v\bigl(\Delta(\psi)\,\vol\bigr)
\stackrel{\text{HKP}}{=}\\
-\int\limits_\mathcal{M}\varphi\,\left(\Delta\bigl(\mathcal{L}_v(\psi)\bigr) +c_v\,\left(\frac{D}{2}+1\right)\,
 \Delta(\psi) \right)\, \vol~=-\omega(\varphi,\mathcal{L}_v(\psi))-c_v\,\left(\frac{D}{2}+1\right)\,\omega(\varphi,\psi)~,
\end{multline}
for all $\varphi,\psi\in V$ and $v\in\mathfrak{H}$. In this derivation we have used integration by parts (PI)
and the homothetic Killing properties $\mathcal{L}_v(\vol)=\frac{ c_v D}{2}\vol$ and $[\mathcal{L}_v,\Delta]=c_v\Delta$
(HKP).
Putting (\ref{eqn:leftright}) into (\ref{eqn:consistencycom}) the consistency condition reads
\begin{flalign}
0= -c_v\,\left(\frac{D}{2}+1\right)\,\omega(\varphi,\psi)~,~\forall \varphi,\psi\in V~,
\end{flalign}
which implies $c_v=0$ due to the (weak) nondegeneracy of the symplectic structure $\omega$.
Thus, we can only represent the Lie subalgebra of Killing vector fields $\mathfrak{K}\subseteq\mathfrak{H}$ 
on $\mathcal{A}_{(V,\omega)}$, provided we assume a geometric action.

Let us briefly consider a general vector field $v\in\Xi$.
The two consistency conditions (\ref{eqn:geometricaction}) and (\ref{eqn:consistencycom}) required 
for $v$ to be implementable have the following meaning: (\ref{eqn:consistencycom}) states that
$v$ has to be an infinitesimal symplectic automorphism and (\ref{eqn:geometricaction})
means that the equation of motion operator has to transform as $[\mathcal{L}_v,P]=P\circ \mathcal{O}_v$, with some
operator $\mathcal{O}_v$ mapping compactly supported functions to compactly supported functions. 
These two conditions are of course not fulfilled for the most general vector field $v\in\Xi$.
Since the twisted QFT construction requires a representation of {\it all} vector fields entering the twist
on the algebra $\mathcal{A}_{(V,\omega)}$, the above argumentation shows that not all twists can be implemented. 
Our conjecture, which deserves a rigorous proof, is that a vector field is implementable if and only if it is Killing.

We now show for completeness that the twisted QFT construction is possible if the twist is Killing.
The Lie algebra representation of $\mathfrak{K}$ on $\mathcal{A}_{(V,\omega)}$ extends to a Hopf algebra
representation of the universal enveloping algebra $U\mathfrak{K}$, equipped with the natural coproduct, counit and antipode.
The formal deformation quantization of the Hopf algebra $U\mathfrak{K}$ and algebra $\mathcal{A}_{(V,\omega)}$ 
by a Killing twist $\mathcal{F}\in U\mathfrak{K}\otimes U\mathfrak{K}$ 
is then straightforward, extending the result for the Moyal-Weyl deformation of the Minkowski QFT
\cite{Zahn:2006wt,Balachandran:2007vx,Aschieri:2007sq}.


\end{document}